\newcommand{\Cov}{{\mathrm{Cov}}}
\renewcommand{\P}{{\mathcal{P}}}
\newcommand{\T}{{\mathcal{T}}}
\newcommand{\Q}{{\mathcal{Q}}}
\newcommand{\HypNull}{{\mathcal{H}_0} }
\newcommand{\HypAlt}{{\mathcal{H}_1} }
\newcommand{\clOne}{C^1 }
\newcommand{\clTwo}{C^2 }
\newcommand{\CvarONE}{C^1_{ij} }
\newcommand{\cVarTWO}{C^2_{ij} }
\newcommand{\VarTriplets}{C^1_{ij}C^2_{ij'} }
\newcommand{\cvarREALONE}{c^1_{ij} }
\newcommand{\cVarREALTWOPrime}{c^2_{i'j'} }
\newcommand{\cVarREALTWO}{c^2_{ij} }
\newcommand{\VarREALPairs}{c^1_{ij}c^2_{ij} }
\newcommand{\VarREALTriplets}{c^1_{ij}c^2_{ij'} }
\newcommand{\VarREALQuadruplets}{c^1_{ij}c^2_{i'j'} }
\newcommand{\RIOld}{RI}
\newcommand{\ARIOld}{ARI}
\newcommand{\pikl}{\pi_{k\ell}}
\newcommand{\pik}{\pi_{k.}}
\newcommand{\pil}{\pi_{.\ell}}
\newcommand{\nkl}{n_{k\ell}}
\newcommand{\nl}{n_{.\ell}}
\newcommand{\nk}{n_{k.}}
\newtheorem{prop}{Proposition}[section] 
\begin{document}

\title{Adjusting the adjusted Rand Index - A multinomial story
}


\author{Martina Sundqvist         \and
        Julien Chiquet \and
        Guillem Rigaill 
}


\institute{M. Sundqvist
\at MIA-Paris, UMR 518 AgroParisTech, INRAE, Université Paris-Saclay, 75005, Paris, France
\at  Institut Curie - PSL Research University, Translational Research Department, Breast Cancer Biology Group, 26 rue d’Ulm, 75005 Paris, France
           \email{ma.sundqvis@gmail.com}          
\and
J. Chiquet
\at  MIA-Paris, UMR 518 AgroParisTech, INRAE, Université Paris-Saclay, 75005, Paris, France
   \email{julien.chiquet@inrae.fr}
\and
G. Rigaill 
\at Université Paris-Saclay, CNRS, INRAE, Univ Evry,  Institute of Plant Sciences Paris-Saclay (IPS2), 91405, Orsay, France
\at  Université de Paris, CNRS, INRAE, Institute of Plant Sciences Paris-Saclay (IPS2), 91405, Orsay, France
\at Laboratoire de Mathématiques et Modélisation d'Evry (LaMME), Université d'Evry Val d’Essonne, UMR CNRS 8071, ENSIIE, USC INRA, guillem.rigaill@inra.fr
}


\date{Received: date / Accepted: date}

\maketitle

\begin{abstract}
The Adjusted Rand Index ($ARI$) is arguably one of the most popular measures for cluster comparison. The adjustment of the $ARI$ is based on a hypergeometric distribution assumption which is unsatisfying from a modeling perspective as (i) it is not appropriate when the two clusterings are dependent, (ii) it forces the size of the clusters, and (iii) it ignores randomness of the sampling. In this work, we present a new "modified" version of the Rand Index. First, we redefine the $MRI$ by only counting the pairs consistent by similarity and ignoring the pairs consistent by difference, increasing the interpretability of the score. Second, we base the adjusted version, $MARI$, on a multinomial distribution instead of a hypergeometric distribution. The multinomial model is advantageous as it does not force the size of the clusters, properly models randomness, and is easily extended to the dependant case. We show that the $ARI$ is biased under the multinomial model and that the difference between the $ARI$ and $MARI$ can be large for small $n$ but essentially vanish for large $n$, where $n$ is the number of individuals. Finally, we provide an efficient algorithm to compute all these quantities ($(A)RI$ and $M(A)RI$) by relying on a sparse representation of the contingency table in our \texttt{aricode} package. The space and time complexity is linear in the number of samples and importantly does not depend on the number of clusters as we do not explicitly compute the contingency table.

\keywords{ Clustering \and Rand Index \and Multinomial distribution \and Statistical Inference}
\end{abstract}
\section{Introduction}\label{sec:Intro}

With the increasing amount of data available, development of clustering methods have become crucial in unsupervised learning to explore and find patterns in data sets. 
Despite the  wealth of theoretical research on this subject, in practice selecting and validating a clustering is difficult. To answer these questions, one often resorts to a measure of clustering comparison: when the data is labeled, the quality of the clustering is evaluated by measuring the overlap with the original labeling; in the absence of labels, the reliability of the clustering can be assessed by evaluating its stability \citep[see, e.g.][]{von2010clustering}. This can be done by comparing several clusterings obtained by perturbing the initial data set (i.e. with resampling), or by running different clustering methods on the same data set. The idea of clustering stability is dug deeper in cluster ensembles \citep{strehl2002cluster} and its variants,  which involve measures of clustering comparison in the construction of the  clustering itself.

Among the many measures proposed for pairwise clustering comparisons \citep[see][for an overview]{vinh2010information} one of the most popular is the Rand index ($RI$) \citep{rand1971} and its adjusted variant \citep{hubert1985comparing, morey1984measurement}.  The $RI$ is designed to estimate the probability of having a coherent pair, which is a pair for which its two observations are either in the same group in the two compared clusterings or in different groups. It is computed from the contingency table of the two classifications.  However, the $RI$ depends on the number of groups \citep{morey1984measurement} and  is therefore difficult to interpret. To overcome this issue, the Adjusted Rand Index (in short $ARI$) is obtained by subtracting to the $RI$ an estimator of its expected value obtained under the assumption of two independent clusterings. 

To obtain such an estimator, a population distribution has to be assumed upon the two compared clusterings, or more specifically upon the marginals of the contingency table of the two clusterings. Considering either the clusters sizes fixed or not, the two natural hypotheses that arise are either the hypergeometric distribution or the multinomial distribution. In the literature, there is discordance as to which of these hypotheses to use.

The $RI$ and $ARI$ as defined by \cite{brennan1974measuring} and then adapted by \cite{hubert1985comparing} are based on the hypergeometric distribution hypothesis. In fact, considering fixed cluster sizes makes calculations easier and the expected value of the $RI$ deterministic. However, this is a strong assumption that is violated in all cluster studies since no clustering algorithm fixes cluster sizes \citep[see][for a detailed discussion]{wagner2007comparing}. Moreover, from a modeling perspective, it implicitly ignores any randomness of the sampling procedure and considers that the set of individuals that we observed is fixed. Hence under this model the $(A)RI$ are post-hoc quantities for which no inference to a parental population can be done, which limits the interpretation exclusively to the observed data points. Assuming the marginal to be fixed certainly simplifies the calculations under the hypothesis of independence between clustering. However, modeling dependency between clusterings under this assumption is not straightforward and rather unnatural compared to the multinomial model. Yet one certainly hopes to compare clusterings that are alike or dependant.

In comparison, the multinomial model does not assume the size of the clusters to be fixed, by considering a sample observed from an infinite population. Modeling dependent clusterings and adjusting accordingly is then greatly simplified. For all these reasons we argue that the multinomial model is more natural from a statistical perspective. Note that \cite{morey1984measurement} already studied this model to propose an adjusted version of the $RI$. Nonetheless, as pointed out in \cite{hubert1985comparing, steinley2004properties,steinley2018note}, \citeauthor{morey1984measurement} made an error in their calculation of the expected value of the $RI$, assuming that the expected value of a squared variable is the square of the expected value, which is wrong in general. We are convinced that this error is the reason for the problem described in \cite{steinley2018note}, advocating unfairly for the hypergeometric version of the $(A)RI$.

\begin{center}
\S
\end{center}

In this work, we essentially make a rigorous statistical analysis of the $RI$ under the hypothesis of a multinomial distribution. In details, our contributions are the following:
\begin{enumerate}
    \item Define new versions of the $RI$ and the $ARI$, denoted by $MRI$ and $MARI$ (for "modified" $(A)RI$), only counting consistent pairs by similarity. Indeed, we show that counting consistent pairs by dissimilarity is unnecessary and blurs the interpretation. In terms of our newly defined $MARI$, considering those pairs would simply result in a multiplication by 2. 
    \item Finalise the work of \cite{morey1984measurement} and derive an unbiased estimator of the expected value of the $MRI$ under a multinomial distribution valid for data under  $\HypAlt$ (dependent clusterings) and $\HypNull$ (independent clusterings).
    \item Provide an efficient algorithm to compute all these quantities ($(A)RI$ and $M(A)RI$) by relying on a sparse representation of the contingency table. The complexity is in $\mathcal{O}(n)$ time and space where $n$ is the number of individuals. This is better than the usual  $\mathcal{O}(n + KL)$ complexity, where  $K$ and $L$ are the sizes of the two clusterings one which to compare, typically obtained when using the non-sparse contingency table. Our code is available in versions $\geq 1.0.0$ of the \texttt{R} package \texttt{aricode} \citep{aricode}.
    \item Investigate the difference with the hypergeometric \citeauthor{hubert1985comparing}'s $ARI$ and show that it is biased under the multinomial distribution, even if the difference between the two estimators remains small. This is in contradiction with the results of \cite{steinley2018note} that used the faulty $ARI$ of \cite{morey1984measurement}.
\end{enumerate}

\section{Statistical Model}\label{sec:modelStat}

\subsection{A new Rand Index - counting only pairs consistent by similarity}\label{sec:newdefRI}

The Rand Index ($RI$) proposed by \cite{rand1971} counts all the consistent pairs in two given classifications. In details, let us consider two classifications $\clOne$ and $\clTwo$ in respectively $K$ and $L$ classes of the same $n$ individuals. The labels of individual $i$ are given by 
$c^1_{i} \in [1, \ldots ,K]$ and $c^2_{i} \in  [1, \ldots, L]$.  The consistent pairs are all pairs where observations $i$ and $j$ are in the same group  (consistent by similarity), or in different groups (consistent by difference) in $\clOne$ and $\clTwo$. 

We introduce the two quantities $c^1_{ij}$ and $c^2_{ij}$ indicating whether $i$ and $j$ are in the same group for respectively classification $\clOne$ and $\clTwo$ : 
\begin{equation*}
c^1_{ij} = \left\{
    \begin{array}{ll}
        1 & \text{if }  c^1_{i} = c^1_{j} = k, \\
        0 & \text{otherwise},
    \end{array}
\right. \quad \text{and}
\quad c^2_{ij} = \left\{
    \begin{array}{ll}
        1 & \text{if }  c^2_{i} = c^2_{j} = \ell, \\
        0 & \text{otherwise}.
    \end{array}
\right.    
\end{equation*}
Note that $c^1_{ij}$ and $c^2_{ij}$ are the realisations of Bernoulli random variables denoted by $\CvarONE$ and $\cVarTWO$ that will prove useful later in our statistical analysis, while studying the $RI$ and other similar quantities as random variables.

Using these two quantities we see that a pair is consistent by similarity if $c^1_{ij} c^2_{ij} = 1$ and consistent by difference if $(1-c^1_{ij})(1-c^2_{ij})= 1$. 
Now considering all pairs, we get the following formula for the $RI$ as defined by \citeauthor{rand1971}:
\begin{equation}
    \begin{split}
        \RIOld(\clOne,\clTwo) & = \frac{1}{{n\choose{2}}} \sum_{i<j} c^1_{ij}c^2_{ij} + \sum_{i<j}(1- c^1_{ij})(1- c^2_{ij}) \\
        & = 1 + \frac{1}{{n\choose{2}}}\bigg[2\sum_{i<j} c^1_{ij}c^2_{ij} - \sum_{i<j} c^1_{ij} - \sum_{i<j} c^2_{ij} \bigg].
    \end{split} \label{eq:redef_justif}
\end{equation}

In Equation \eqref{eq:redef_justif}, we remark that only the product $\sum c^1_{ij}c^2_{ij}$ depends on the joint distribution of $\clOne$ and $\clTwo$: all other terms, coming exclusively from coherent pairs by difference, depend on the marginal distributions of $\clOne$ and $\clTwo$. These terms will thus be cancelled out in any adjusted version of the $\RIOld$, correcting for what would happen if $\clOne$ and $\clTwo$ were drawn independently. 
Hence, we argue that considering the consistent pairs by difference unnecessarily complicates the reasoning and the probabilistic analysis of the $RI$. For simplicity we thus redefine the index and refer to it as the $MRI$ (for "modified" $RI$):

 \begin{equation}\label{eq:RI}
    \begin{split}
        MRI(\clOne,\clTwo) & = \frac{1}{{n\choose{2}}} \sum_{i<j} c^1_{ij}c^2_{ij}.
    \end{split}
\end{equation}

\begin{remark}
For the derivation of the expected value of $MRI$, $RI$ and their adjusted version $MARI$ and $ARI$, using the definition involving $c^1_{ij}$ and $c^2_{ij}$ (or more exactly $C^1_{ij}$ and $C^2_{ij}$ in a probabilistic perspective) considerably simplify the calculations compared to their classical combinatorial formulations. These combinatorial formulations are recalled in the next section as they are classically used to compute
the $RI$ and its variants.
\end{remark}

 \subsection{Computing the Rand Index from the $\nkl$ contingency table }\label{sec:RICompute}

The information from two observed classifications is usually summarized in a contingency table like Table \ref{tab:n_kl Contigency}, representing the number of observations $\nkl$ in group $k$ in $\clOne$ and in group $\ell$ in $\clTwo$.  

\begin{table}[!htbp]
\caption{Contingency Table between clusterings $C^1$ and $C^2$;  each entry $\nkl$ corresponds to the number of observations in group $k$ in $C^1$ and group $\ell$ in $C^2$.}
\begin{center}
\begin{tabular}{c|ccccc|c}
    ${ \atop \clOne}\!\diagdown\!^{\clTwo}$ & $c^2_{1}$& $\cdots$ &  $c^2_{\ell}$ & $\cdots$ & $c^2_{L}$ & Sums \\ \hline
   $c^1_{1}$ & $n_{11}$  & $\cdots$ & $n_{1\ell}$ & $\cdots$ & $n_{1L}$ & $n_{1.}$ \\
   \vdots  & \vdots &  $\ddots$ & \vdots & $\ddots$ & \vdots & \vdots \\ 
    $c^1_{k}$ & $n_{k1}$ &  $\cdots$ &$n_{k\ell}$ &  $\cdots$ & $n_{2L}$ & $n_{k.}$ \\
\vdots & \vdots &   $\ddots$ & \vdots & $\ddots$ & \vdots & \vdots \\ 
   $c^1_{K}$ & $n_{K1}$ & $\cdots$ & $n_{K\ell}$ & $\cdots$ & $n_{KL}$ & $n_{K.}$ \\\hline
   Sums & $n_{.1}$ &  $\cdots$ & $n_{.\ell}$ & $\cdots$ & $n_{.L}$ & $\sum_{k\ell} \nkl=n$\\
  \end{tabular}
 \end{center}
\label{tab:n_kl Contigency}
\end{table}

Using basics combinatorics we get the following relations between $n_{k\ell}, n_{k.}, n_{.\ell}$ and $c_{ij}^1,c_{ij}^2$:
\begin{equation}\label{eq:cVarCounts}
    \sum_{i < j} c_{ij}^1= \sum_k {n_{k.} \choose 2}, \quad \sum_{i < j} c_{ij}^2 = \sum_\ell {\nl \choose 2} \text{  and  } \sum_{i < j} c_{ij}^1 c_{ij}^2 = {\nkl \choose 2}. 
\end{equation}
Expressions \eqref{eq:redef_justif} and \eqref{eq:RI} of $RI$ and $MRI$ turn to 
 \begin{eqnarray}
 \label{eq:formRIM}
    MRI(\clOne,\clTwo) & = & \frac{1}{{n\choose{2}}} \sum_{k,\ell} {\nkl\choose{2}} = \frac{1}{2{n\choose{2}}}\sum_{k,\ell} (n_{k\ell}^2 - n) \\[1.5ex] 
    \RIOld(\clOne,\clTwo) & = & 1 + \frac{2}{{n\choose{2}}} \sum_{k, l} {\nkl\choose{2}} - \frac{1}{{n\choose{2}}}\bigg[\sum_{k}{n_{k.}\choose{2}} + \sum_{l}{n_{.l}\choose{2}} \bigg]. \label{eq:formRIOld} 
\end{eqnarray}

Using these formula, one can see that the minimum of the $MRI$ is obtained when all $n_{k\ell}$ are equal, which has a simple and straightforward interpretation (as two perfectly independent and balanced clusterings). On the other hand the minimum of the $\RIOld$ is obtained for an extremely unbalanced table, \textit{i.e.} when one of the two clustering consists of a single cluster and the other only of clusters containing single points. This makes the interpretation of the $RI$ rather difficult (i.e. the lowest value is not obtained for two perfectly independent and balanced clusterings) and give more credibility to the definition of $MRI$ that does not consider consistent pairs by difference.

\subsection{Probabilistic model and properties of the Rand Index}\label{sec:probabilisticRI}

So far, the $(M)RI$ have been computed from the \textit{observed quantities} $c^1_{ij}, c^2_{ij}$, or equivalently from the observed contingency table $n_{k\ell}$. From now, we aim to study the statistical properties of the $MRI$ and consider its status of random variable\footnote{By a slight abuse of notation, we use $MRI$ for both its observed value and its definition as a random variable. We think that the context suffices for the reader to remove any ambiguity.}:
\begin{equation}
\label{eq:MRI_prob}
    MRI(\clOne,\clTwo) = \frac{1}{{n\choose{2}}} \sum_{i<j} C^1_{ij} \, C^2_{ij},
\end{equation}
where we recall that $C^1_{ij}$ and $C^2_{ij}$ are Bernoulli random variables indicating whether individual $i$ and $j$ are in the same groups in classification $C^1$ respectively $C^2$.

To derive the probability of success associated to  $C^1_{ij}$ and $C^1_{ij}$, we need a probabilistic model for the classification of a given individual in $C^1$ and $C^2$, that is, a counterpart for generating the two observed  clusterings $c^1_i$ and $c^2_i$ for the $n$ data points. We denote by $C^1_i$ and $C^2_i$ the corresponding random variables. A natural model is the multinomial model, which give the joint distribution of $(C^1_i, C^2_i)$ as follows: for all $(k,\ell) \in \{1,\dots,K\} \times  \{1,\dots,L\}$,
\begin{equation*}
    \mathbb{P}(C^1_i = k,  C^2_i = \ell) = \pikl, \quad \text{s.c. } \sum_{k,\ell}^{K,L}\pikl = 1.
\end{equation*}
The marginal probabilities of a given group is defined for $k$ in $\clOne$ by $ \sum_{\ell}^{L}\pikl = \pi_{k.}$ and for $\ell$ in $\clTwo$ by $\sum_{k}^{K}\pikl = \pil$. See Table \ref{tab:Pi Contigency} for a global picture. Compared to the hypergeometric model, the multinomial model easily deals with dependent classifications and does not force the size of the clusters.

\begin{table}[!htbp]
\caption{Multinomial model: probabilistic distributions $\pikl = \mathbb{P}(C^1_i = k,  C^2_i = \ell)$ and marginal distributions $\pi_{k.} = \mathbb{P}(C^1_i = k)$ and $\pi_{.\ell} = \mathbb{P}(C^2_i = \ell)$}
\begin{center}
\begin{tabular}{c|ccccc|c}
    ${ \atop \clOne}\!\diagdown\!^{\clTwo}$ & $c^2_{i}=1$& $\cdots$ &  $c^2_i=\ell$ & $\cdots$ & $c^2_i = L$ & Sums \\ \hline
   $c^1_i = 1$ & $\pi_{11}$  & $\cdots$ & $\pi_{1\ell}$ & $\cdots$ & $\pi_{1L}$ & $\pi_{1.}$ \\
   \vdots  & \vdots &  $\ddots$ & \vdots & $\ddots$ & \vdots & \vdots \\ 
    $c^1_i = k$ & $\pi_{k1}$ &  $\cdots$ &$\pi_{k\ell}$ &  $\cdots$ & $\pi_{2L}$ & $\pi_{k.}$ \\
\vdots & \vdots &   $\ddots$ & \vdots & $\ddots$ & \vdots & \vdots \\ $c^1_i = K$ & $\pi_{K1}$ & $\cdots$ & $\pi_{K\ell}$ & $\cdots$ & $\pi_{KL}$ & $\pi_{K.}$ \\\hline
   Sums & $\pi_{.1}$ &  $\cdots$ & $\pi_{.\ell}$ & $\cdots$ & $\pi_{.L}$ & $\sum_{k\ell} \pikl=1$\\
  \end{tabular}
 \end{center}
\label{tab:Pi Contigency}
\end{table}

Based on this multinomial model for $C^1_i$ and $C^2_i$, it is then relatively straightforward to derive the joint distribution and marginals of $\CvarONE$ and $\cVarTWO$. In particular we have:
\begin{equation}\label{eq:basicProbsOnC}
   \mathbb{P}( \CvarONE = 1) = \sum_k \pik^2, \quad  \mathbb{P}( \cVarTWO = 1) = \sum_\ell \pil^2 \ \text{and}   \quad  \mathbb{P}( \CvarONE\cVarTWO 
   = 1) = \sum_{k,\ell} \pikl^2.
\end{equation}
However, in order to derive the expectation, variance and unbiased adjustment of the $MRI$ under the multinomial model, one not only needs to characterize events on the classification $C^1$  and $C^2$ on (unordered) pairs of individual  $\{i,j\}$, but also on \textit{pairs of pairs} of individual $\{i,j\}$ and $\{i',j'\}$, with terms like the expectation of $C^1_{ij} \times C^2_{i'j'}$. The following section derives a couple of technical -- yet simple -- lemmas, on events implying such random variables so that the final calculation of the moments of $MRI$ under the multinomial model are straightforward.

\begin{remark}
To our knowledge most derivations of the expectation and variance of the $\RIOld$ found in the literature are based on the combinatorial formulation given in Equation \eqref{eq:formRIOld}: these derivations rely on general results on the moments of either the multinomial or the generalized hypergeometric distribution and involve tedious calculations. In contrast, our  proofs, found in the next sections, are short, self-contained and easily accessible to any reader with some basic knowledge in probability and statistics. For this reason we argue that our proofs are interesting in their own rights.
\end{remark}

\subsubsection{Subsets of Pairs of Pairs - preparing the derivations of the moments of the $MRI$}\label{sec:quadruplets}

Consider $\{i, j\}$ and $\{i', j'\}$ the $\P \times \P$ set of unordered pairs of $\{1,\dots,n\}^2$ such that $i<j$ and $i'<j'$. This set is composed by pairs of pairs, and can equivalently be seen as the set of all quadruplets of  $\{1,\dots,n\}^4$ such that  $i<j$ and $i'<j'$. We partition this set into the three following subsets:
\begin{enumerate}
    \item the unordered pairs $\P$,
    \item the ordered-triplets $\T,$
    \item  the ordered quadruplets $\Q$. 
\end{enumerate}
These three subsets $\P, \T$ and $\Q$ makes a partition of $\P \times \P$ and in particular, \[
    |\P|^2 = |\P| + |\T| + |\Q|.
\]

We now study respectively  $\P, \T$ and $\Q$ in the three following lemmas: we derive their cardinality and compute some expectations involving these subsets and the $C^1_{ij}$, $C^2_{ij}$ variables under the multinomial model. These three lemmas will be the building blocks for the characterization of the $MRI$.

\begin{lemma}[Subset of unordered pairs $\P$] \label{lemma:pairs}
With a slight abuse of notation, we consider $\P$ as a subset of $\P \times \P$:
  \begin{equation*}
    \P = \{\{i, j, i', j'\}: |\{i, j\} \cup \{i', j'\}| = 2.\}
 \end{equation*}
The cardinality of $\P$ is $|\P| = {n \choose 2}$ and 
\begin{equation}\label{eq:Pairs_expected}
\mathbb{E}\left(\sum_{i, j \in \P} \CvarONE\cVarTWO\right) = {n \choose 2} \sum_{k\ell} \pikl^2.
\end{equation}
\end{lemma}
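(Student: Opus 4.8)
The plan is to first unwind the notation: the defining constraint $|\{i, j\}\cup\{i', j'\}|=2$ forces $\{i',j'\}=\{i,j\}$, so an element of $\P$ (viewed inside $\P\times\P$) is really just a single unordered pair $\{i,j\}$ with $i<j$, and $|\P|$ is the number of such pairs, i.e. ${n\choose 2}$. Since the indicators are symmetric ($C^1_{ij}=C^1_{ji}$, and likewise for $C^2$), the pair-of-pairs summand collapses: $C^1_{ij}C^2_{i'j'}=\CvarONE\cVarTWO$. Hence $\sum_{i,j\in\P}\CvarONE\cVarTWO=\sum_{i<j}\CvarONE\cVarTWO$, and it only remains to compute the expectation of this ordinary sum over pairs.

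For that I would use linearity of expectation together with the fact that a $\{0,1\}$-valued random variable has expectation equal to its probability of being $1$: $\mathbb{E}\big(\sum_{i<j}\CvarONE\cVarTWO\big)=\sum_{i<j}\mathbb{P}(\CvarONE\cVarTWO=1)$. Next I would evaluate this probability. The event $\{\CvarONE\cVarTWO=1\}$ is exactly $\{C^1_i=C^1_j\}\cap\{C^2_i=C^2_j\}$, which splits as the disjoint union over $(k,\ell)$ of the events $\{C^1_i=C^1_j=k,\ C^2_i=C^2_j=\ell\}$. Under the multinomial model the pairs $(C^1_i,C^2_i)$ and $(C^1_j,C^2_j)$ are independent (distinct individuals are drawn i.i.d.), so each such event has probability $\pikl^2$, and summing over $k,\ell$ gives $\mathbb{P}(\CvarONE\cVarTWO=1)=\sum_{k,\ell}\pikl^2$ --- precisely the last identity of \eqref{eq:basicProbsOnC}. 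Because this probability does not depend on the pair $\{i,j\}$, summing the ${n\choose 2}$ equal terms yields \eqref{eq:Pairs_expected}.

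I do not expect any genuine obstacle in this lemma: the only mild subtlety is the bookkeeping that identifies $\P$ with the set of unordered pairs, and the symmetry remark that collapses the quadruplet summand to $\CvarONE\cVarTWO$; the rest is linearity of expectation and the independence of distinct individuals, which is already recorded in \eqref{eq:basicProbsOnC}. The genuinely delicate counting and (only partial) independence arguments arise instead for the ordered-triplet set $\T$ and the ordered-quadruplet set $\Q$ treated in the following two lemmas, where the four indices are no longer all forced to coincide.
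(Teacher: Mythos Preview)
Your proposal is correct and follows essentially the same approach as the paper: invoke \eqref{eq:basicProbsOnC} to get $\mathbb{E}(\CvarONE\cVarTWO)=\sum_{k\ell}\pikl^2$ for any pair, then sum over the ${n\choose 2}$ pairs. The paper's own proof is a two-line version of what you wrote; your additional unpacking of the $|\{i,j\}\cup\{i',j'\}|=2$ constraint and the independence argument behind \eqref{eq:basicProbsOnC} is just a more detailed rendering of the same steps.
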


\begin{proof}
For any $i, j  \in \P$, we have from \eqref{eq:basicProbsOnC} that $\mathbb{E}( \CvarONE\cVarTWO) = \sum_{k\ell} \pikl^2$. We just need to sum over all possible pairs to get the desired result.
\end{proof}

\begin{lemma}[Subset of ordered triplets $\T$]
\label{lemma:triplet}
Consider the subset $\T$ of $\P\times\P$
 \begin{equation*}
     \T = \{\{i, j, i', j'\}: |\{i, j\} \cup \{i', j'\}| = 3\}.
 \end{equation*}
The cardinality of $\T$ is  $|\T| = n(n-1)(n-2)$ and 

\begin{equation}
    \label{eq:Triplets_expected}
    \mathbb{E} \left( \sum_{\T} \VarTriplets\right) = n(n-1)(n-2) \sum_{k\ell} \pikl\pik\pil.
\end{equation}

\begin{equation}
    \label{eq:Triplets_expected2}
    \mathbb{E}\left( \sum_{\T} C^1_{ij} C^2_{ij} C^1_{ij'}C^2_{ij'}\right) = n(n-1)(n-2) \sum_{k\ell} \pikl^3.
\end{equation}
\end{lemma}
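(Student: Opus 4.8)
The plan is to split the problem into two parts: first establish the cardinality $|\T| = n(n-1)(n-2)$ by a direct counting argument, then compute the two expectations by exploiting the fact that on a triplet the relevant random variables decompose into quantities that depend on disjoint (hence independent) individuals.

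\textit{Counting $|\T|$.} An element of $\T$ is a pair of pairs $\{i,j\}$ and $\{i',j'\}$ (with $i<j$, $i'<j'$) whose union has exactly three distinct elements, i.e. the two pairs share exactly one index. First I would argue that any such configuration is determined by: choosing the shared individual, and then choosing the remaining two individuals in order (one belongs to $\{i,j\}$, the other to $\{i',j'\}$). A cleaner bookkeeping: there are $n(n-1)(n-2)$ ordered triplets of distinct indices $(a,b,c)$; I would set up a bijection between such ordered triplets and elements of $\T$, letting $a$ be the common index and $\{a,b\}$, $\{a,c\}$ the two pairs — being careful that the ordering $i<j$ inside each pair means the "which pair is first" information is encoded by the order of $b$ versus $c$ rather than by their size. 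This gives $|\T| = n(n-1)(n-2)$, consistent with $|\P|^2 = |\P| + |\T| + |\Q|$.

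\textit{The two expectations.} Fix a triplet configuration and write the common index as $i$ and the two private indices as $j$ (in the first pair) and $j'$ (in the second pair), so the summand of \eqref{eq:Triplets_expected} is $C^1_{ij}C^2_{ij'}$ and that of \eqref{eq:Triplets_expected2} is $C^1_{ij}C^2_{ij}C^1_{ij'}C^2_{ij'}$. The key observation is that $C^1_{ij} = \mathbf{1}\{C^1_i = C^1_j\}$ etc., and the three underlying classification variables $C^1_i,C^1_j,C^1_{j'}$ (and likewise for $C^2$) are i.i.d. across the three distinct individuals under the multinomial model. For \eqref{eq:Triplets_expected}: $C^1_{ij}C^2_{ij'} = 1$ iff $C^1_i = C^1_j$ and $C^2_i = C^2_{j'}$; conditioning on the label pair $(C^1_i, C^2_i) = (k,\ell)$, which has probability $\pikl$, the event $C^1_j = k$ has probability $\pik$ (independently) and $C^2_{j'} = \ell$ has probability $\pil$ (independently), so $\mathbb{E}(C^1_{ij}C^2_{ij'}) = \sum_{k\ell}\pikl\pik\pil$. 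For \eqref{eq:Triplets_expected2}: the summand equals $\mathbf{1}\{C^1_i = C^1_j,\ C^2_i = C^2_j,\ C^1_i = C^1_{j'},\ C^2_i = C^2_{j'}\}$; conditioning on $(C^1_i,C^2_i)=(k,\ell)$ the four remaining conditions are "individual $j$ has labels $(k,\ell)$" and "individual $j'$ has labels $(k,\ell)$", each of probability $\pikl$ independently, giving $\mathbb{E} = \sum_{k\ell}\pikl^3$. Summing over the $|\T| = n(n-1)(n-2)$ configurations — and noting the expectation is the same for every configuration by exchangeability — yields both displayed formulas.

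\textit{Main obstacle.} The probabilistic content is genuinely easy given \eqref{eq:basicProbsOnC} and independence across individuals; the only delicate point is the combinatorial bookkeeping for $|\T|$ and, relatedly, making sure that when we say "the expectation is the same for all configurations in $\T$" we have correctly accounted for the roles of the shared versus private indices (e.g. the common index could a priori be the smaller or larger element of its pair). I would handle this by reducing every configuration to the canonical form "common index $i$, first pair $\{i,j\}$, second pair $\{i,j'\}$" via relabeling that does not change the value of $C^1_{ij}C^2_{ij'}$ (since $C^1_{ij}$ is symmetric in its indices), so that exchangeability of the i.i.d. triple $(C^1_i,C^2_i),(C^1_j,C^2_j),(C^1_{j'},C^2_{j'})$ makes the per-configuration expectation manifestly constant.
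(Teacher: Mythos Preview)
Your proposal is correct and follows essentially the same route as the paper: a bijection with ordered triples of distinct indices for $|\T|$, and computing the per-configuration expectation by identifying when the Bernoulli product equals $1$ and summing over the relevant label assignments of $i,j,j'$. Your treatment is in fact a bit more careful than the paper's about the symmetry/exchangeability bookkeeping (the paper just says ``one can map to the set of arrangements'' and then computes the expectation for one canonical configuration without explicitly justifying constancy across $\T$), but the underlying argument is identical.
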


\begin{proof} For the cardinality of $\T$, one can map to the set of arrangements of $\{1, \ldots, n\}^3.$ 

For \eqref{eq:Triplets_expected}, remark that $C^1_{ij} C^2_{ij'}$ is a Bernoulli variable equal to 1 only when $i$ and $j$ are in the same cluster $k$ in $C^1$ and $i$ and $j'$ are in the same cluster $\ell$ in $C^2$ . Hence, $j$ can be in any cluster $\ell'$ in $\clTwo$ and $j'$ can be in any cluster $k'$ in $\clOne$. From here one easily get its expectation, 
\begin{equation*}
       \mathbb{E}(\VarTriplets) = \sum_{k\ell k'\ell'}\pikl\pi_{k'\ell}\pi_{k\ell'} = \sum_{k\ell k'\ell'}\pi_{k\ell}\sum_{k'}\pi_{k'\ell}\sum_{\ell'}\pi_{k\ell'} = \sum_{k\ell}\pikl\pil\pik
\end{equation*}
and we get the desired result by summing over all triplets.

For \eqref{eq:Triplets_expected2}, remark that $C^1_{ij} C^2_{ij} C^1_{ij'}C^2_{ij'}$ is a Bernoulli variable equal to 1 if and only if $i, j$ and $j'$ are in the same clusters for both classifications. Summing over all $\T$ we get \eqref{eq:Triplets_expected2}.
\end{proof}

\begin{lemma}[Subset of ordered quadruplets $\Q$]\label{lemma:quad}
Consider the following subset $\Q$ of $\P\times\P$:
\begin{equation*}
    \Q = \{\{i, j, i', j'\}: |\{i, j\} \cup \{i', j'\}| \}= 4\}.
\end{equation*}
The cardinality is  $|\Q| = 6 {n \choose 4}$ and
\begin{equation}\label{eq:Quadruplets_expected}
    \mathbb{E} \left( \sum_{\Q} C^1_{ij}C^2_{i',j'} \right) = 6 {n \choose 4} \sum_{k,\ell} \pik^2\pil^2.
\end{equation}
\end{lemma}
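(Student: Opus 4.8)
The plan is to mirror the structure of the two preceding lemmas: first establish the cardinality $|\Q| = 6\binom{n}{4}$ by a counting argument, then compute the single expectation $\mathbb{E}(C^1_{ij} C^2_{i'j'})$ for one quadruplet with all four indices distinct, and finally sum over all of $\Q$. For the cardinality, I would argue that choosing an element of $\Q$ amounts to choosing a set of four distinct indices from $\{1,\dots,n\}$, which can be done in $\binom{n}{4}$ ways, and then splitting those four labels into an unordered pair $\{i,j\}$ (with $i<j$) and an unordered pair $\{i',j'\}$ (with $i'<j'$), where the two pairs are \emph{distinguishable} (the first belongs to $C^1$, the second to $C^2$). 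The number of ways to partition a 4-element set into a first pair and a second pair is $\binom{4}{2} = 6$ — equivalently $\tfrac{4!}{2!\,2!}\cdot \tfrac{1}{1} = 6$ ordered pairs-of-pairs — giving $|\Q| = 6\binom{n}{4}$. A quick sanity check via the identity $|\P|^2 = |\P| + |\T| + |\Q|$ stated in the excerpt, using $|\P| = \binom{n}{2}$ and $|\T| = n(n-1)(n-2)$, confirms $|\Q| = \binom{n}{2}^2 - \binom{n}{2} - n(n-1)(n-2) = 6\binom{n}{4}$, so I would present this as a corroborating remark.

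Next, for a fixed quadruplet with $i,j,i',j'$ all distinct, I would observe that $C^1_{ij}$ depends only on the labels $C^1_i, C^1_j$ and $C^2_{i'j'}$ depends only on $C^2_{i'}, C^2_{j'}$. Since the four individuals are distinct and the multinomial model draws the pairs $(C^1_m, C^2_m)$ i.i.d.\ across individuals $m$, the random variables $C^1_{ij}$ and $C^2_{i'j'}$ involve disjoint sets of individuals and are therefore independent. Hence $\mathbb{E}(C^1_{ij} C^2_{i'j'}) = \mathbb{E}(C^1_{ij})\,\mathbb{E}(C^2_{i'j'}) = \big(\sum_k \pik^2\big)\big(\sum_\ell \pil^2\big) = \sum_{k,\ell} \pik^2 \pil^2$, where the marginal probabilities come from \eqref{eq:basicProbsOnC}. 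Summing this constant over all $6\binom{n}{4}$ elements of $\Q$ yields \eqref{eq:Quadruplets_expected}.

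The only subtle point — and the one I would state carefully rather than the bulk of the work — is the independence claim: it rests on the fact that $\{i,j\} \cap \{i',j'\} = \emptyset$ for quadruplets in $\Q$, so no individual's label is shared between the two Bernoulli indicators, combined with the i.i.d.\ structure of the multinomial model across individuals. Note in particular that this would \emph{fail} for $\T$, where one shared individual couples the two indicators — which is exactly why Lemma~\ref{lemma:triplet} produces the more complicated sum $\sum_{k\ell}\pikl\pik\pil$ rather than a product of marginals. Everything else is elementary counting and a one-line application of \eqref{eq:basicProbsOnC}, so I expect no real obstacle beyond being precise about which individuals appear in which factor.
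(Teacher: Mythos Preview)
Your proposal is correct and follows essentially the same approach as the paper: the paper's proof also counts $|\Q|$ as $\binom{n}{4}\cdot\binom{4}{2}$ and then obtains the expectation by invoking independence of $C^1_{ij}$ and $C^2_{i'j'}$ (since $i,j,i',j'$ are all distinct and their classes are drawn independently), summing over $\Q$. Your write-up is in fact more explicit than the paper's, which dispatches the argument in three sentences; the sanity check via $|\P|^2 = |\P| + |\T| + |\Q|$ and the contrast with the triplet case are nice additions but not present in the original.
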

 
\begin{proof}
There are ${n \choose 4}$ ways to pick 4 distinct elements of $\{1, ..., n\}^4$. We can then arrange those in ${4 \choose 2}$ to get an element of $\Q$. Hence, all together there are $6 {n \choose 4}$ quadruplets. 
We get $\mathbb{E}( \VarTriplets)$ using the fact that $i, j, i', j'$ are all different and that their classes  are drawn independently. We then sum over $\Q$.
\end{proof}

\subsubsection{Expectation and Variance of the Rand Index}\label{sec:RIpropChar}

With Lemmas~\ref{lemma:pairs}, ~\ref{lemma:triplet} and \ref{lemma:quad}, we are now equipped to easily derive the moments of the $MRI$. We use $\mathbb{E}$ for stating the expectation understood under the multinomial model in general. With the additional assumption of independence between the classification, what we refer to as the \textit{null hypothesis}, we use $\mathbb{E}_{\mathcal{H}_0}$. This terms is classically used for adjusting the Rand index.

\begin{prop}[Expectations of the $MRI$] 
Let $\theta$ denote the expectation of the $MRI$ and $\theta_0$ the expectation under $\mathcal{H}_0$. Then, 
$$ \theta = \mathbb{E}(MRI) = \sum_{k\ell} \pikl^2  , \qquad \qquad \theta_0 = \mathbb{E}_\HypNull(MRI) = \sum_{k\ell} \pik^2 \pil^2 $$
\end{prop}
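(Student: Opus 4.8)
The plan is to compute $\mathbb{E}(MRI)$ directly from its probabilistic definition \eqref{eq:MRI_prob}, using linearity of expectation and the elementary identities already established. Recall that $MRI(\clOne,\clTwo) = \frac{1}{{n\choose 2}}\sum_{i<j} C^1_{ij}C^2_{ij}$, and that the index set $\{(i,j): i<j\}$ is exactly $\P$ from Lemma~\ref{lemma:pairs}. So $\mathbb{E}(MRI) = \frac{1}{{n\choose 2}} \mathbb{E}\left(\sum_{i,j\in\P} C^1_{ij}C^2_{ij}\right)$, and applying \eqref{eq:Pairs_expected} immediately gives $\frac{1}{{n\choose 2}} \cdot {n\choose 2}\sum_{k\ell}\pikl^2 = \sum_{k\ell}\pikl^2$, which is the claimed value of $\theta$.

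For $\theta_0$, the only difference is that the expectation $\mathbb{E}_{\HypNull}(C^1_{ij}C^2_{ij})$ must be recomputed under the independence assumption. Under $\HypNull$, the classifications $C^1$ and $C^2$ are drawn independently, so $C^1_{ij}$ and $C^2_{ij}$ are independent Bernoulli variables; hence $\mathbb{E}_{\HypNull}(C^1_{ij}C^2_{ij}) = \mathbb{P}(C^1_{ij}=1)\,\mathbb{P}(C^2_{ij}=1) = \left(\sum_k \pik^2\right)\left(\sum_\ell \pil^2\right)$ by \eqref{eq:basicProbsOnC}. Wait — one must be slightly careful: expanding that product gives $\sum_{k,\ell} \pik^2\pil^2$ only after relabelling the two independent summation indices, which is routine. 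Summing over the ${n\choose 2}$ pairs in $\P$ and dividing by ${n\choose 2}$ then yields $\theta_0 = \sum_{k\ell}\pik^2\pil^2$.

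In fact there is essentially no obstacle here: both identities are one-line consequences of linearity of expectation together with the per-pair expectations. The mild subtlety — and the thing I would be most careful to state cleanly — is the justification that under $\HypNull$ the product expectation factorizes; this uses that independence of $C^1$ and $C^2$ at the level of individuals (the full multinomial reducing to a product of marginals, $\pikl = \pik\pil$) propagates to independence of the pair-indicators $C^1_{ij}$ and $C^2_{ij}$. Once that is granted, everything is immediate. Optionally I would also remark that plugging $\pikl = \pik\pil$ into the formula for $\theta$ recovers $\theta_0$, which is a useful consistency check and makes transparent why $\theta_0$ is the natural quantity to subtract when forming the adjusted index.
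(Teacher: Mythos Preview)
Your proposal is correct and follows essentially the same approach as the paper: for $\theta$ you invoke the definition of $MRI$ together with Lemma~\ref{lemma:pairs}, exactly as the paper does. For $\theta_0$ the paper simply substitutes $\pikl = \pik\pil$ into the expression for $\theta$, whereas you first argue via independence of $C^1_{ij}$ and $C^2_{ij}$ under $\HypNull$; but you note yourself that these are equivalent, and the paper's route is precisely the ``consistency check'' you mention at the end.
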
 

\begin{proof} By Definition \ref{eq:MRI_prob} and Lemma \ref{lemma:pairs} we obtain $\theta$. For $\theta_0$, it suffices to replace $\pi_{kl}$ by $\pi_{k.}\pi_{.l}$ in the previous formula.
\end{proof}

Similarly, we derive the expectation of the "usual" $\RIOld$. 
\begin{prop}\label{prop:ExOldRI}
Let $\theta^{RI}$ denotes the expectation of the $RI$ and $\theta_0^{RI}$ the expectation under $\mathcal{H}_0$. Then,
\begin{eqnarray*}
\theta^{RI} & = & \mathbb{E}(\RIOld) = 1 + 2\sum_{k\ell} \pikl^2 - \sum_{k} \pik^2 - \sum_{\ell}\pil^2  \\
\theta_0^{RI} & = & \mathbb{E}_\HypNull(\RIOld)= 1 + 2\sum_{k\ell} \pik^2 \pil^2 - \sum_{k} \pik^2 - \sum_{\ell}\pil^2 
\end{eqnarray*}
\end{prop}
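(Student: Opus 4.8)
The plan is to reduce everything to the already-proven expectation of the $MRI$ together with the linearity of expectation applied to the combinatorial identity \eqref{eq:formRIOld}. Since by \eqref{eq:redef_justif}--\eqref{eq:RI} we have the exact (deterministic) relation
\begin{equation*}
RI(\clOne,\clTwo) = 1 + 2\, MRI(\clOne,\clTwo) - \frac{1}{{n\choose 2}}\sum_{i<j} C^1_{ij} - \frac{1}{{n\choose 2}}\sum_{i<j} C^2_{ij},
\end{equation*}
the whole statement follows by taking expectations of both sides, provided we can evaluate $\mathbb{E}\bigl(\sum_{i<j} C^1_{ij}\bigr)$ and $\mathbb{E}\bigl(\sum_{i<j} C^2_{ij}\bigr)$ under the multinomial model. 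First I would record, exactly as in the proof of Lemma~\ref{lemma:pairs}, that for any pair $i<j$ one has $\mathbb{E}(C^1_{ij}) = \mathbb{P}(C^1_{ij}=1) = \sum_k \pik^2$ and $\mathbb{E}(C^2_{ij}) = \sum_\ell \pil^2$ by \eqref{eq:basicProbsOnC}; summing over the ${n\choose 2}$ pairs and dividing by ${n\choose 2}$ gives $\mathbb{E}\bigl(\tfrac{1}{{n\choose 2}}\sum_{i<j} C^1_{ij}\bigr) = \sum_k \pik^2$ and likewise $\sum_\ell \pil^2$ for the second classification.

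Combining these with the Proposition on the expectations of the $MRI$, i.e. $\mathbb{E}(MRI) = \sum_{k\ell}\pikl^2$, linearity of expectation immediately yields
\begin{equation*}
\theta^{RI} = \mathbb{E}(RI) = 1 + 2\sum_{k\ell}\pikl^2 - \sum_k \pik^2 - \sum_\ell \pil^2,
\end{equation*}
which is the first displayed formula. For the second, under $\HypNull$ the classifications $C^1_i$ and $C^2_i$ are independent, so $\pikl = \pik\pil$; substituting this into the expression for $\mathbb{E}(MRI)$ turns $\sum_{k\ell}\pikl^2$ into $\sum_{k\ell}\pik^2\pil^2$, while the marginal terms $\sum_k\pik^2$ and $\sum_\ell\pil^2$ are unchanged since they depend only on the marginals. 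This gives $\theta_0^{RI} = 1 + 2\sum_{k\ell}\pik^2\pil^2 - \sum_k\pik^2 - \sum_\ell\pil^2$, completing the proof.

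There is essentially no obstacle here: the result is a one-line consequence of the Proposition on $\mathbb{E}(MRI)$, the elementary marginal computations from \eqref{eq:basicProbsOnC}, and the linearity of expectation, all glued together by the deterministic identity \eqref{eq:redef_justif}. The only point requiring a modicum of care is the bookkeeping in the $\HypNull$ case — making sure one substitutes $\pikl = \pik\pil$ only in the joint term and not in the marginal sums (which would be meaningless double-counting) — but this is immediate once one notes that the marginal sums $\sum_k\pik^2$, $\sum_\ell\pil^2$ are functionally independent of the dependence structure between $C^1$ and $C^2$. I would therefore present the proof in two short sentences mirroring the proof of the preceding Proposition, referring back to Lemma~\ref{lemma:pairs} and Equations \eqref{eq:redef_justif} and \eqref{eq:basicProbsOnC}.
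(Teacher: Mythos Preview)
Your proposal is correct and follows essentially the same approach as the paper: the paper's proof simply notes that, compared to the $MRI$, the only additional terms in the $RI$ are $1$ and the two marginal sums $\sum_{i<j} C^1_{ij}$, $\sum_{i<j} C^2_{ij}$, whose expectations follow from \eqref{eq:basicProbsOnC} by summing over all pairs. Your write-up is a slightly more detailed version of exactly this argument, including the same $\HypNull$ substitution $\pikl = \pik\pil$.
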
 

\begin{proof} Compared to the $MRI$, the only additional terms are $1 + \sum_{i,j} C^1_{ij} + \sum_{i,j} C^2_{ij}$. Using \eqref{eq:basicProbsOnC} and summing over all pairs $\P$ we get the desired results.
\end{proof}

We now continue with the variance of the $MRI$.

\begin{prop} Let $\sigma^2= \mathbb{V}(MRI)$ be the variance of the $MRI$. Then, 
\begin{equation*}
\sigma^2 = 
\frac{1}{{{n}\choose{2}}} \ \left(\sum_{k,\ell} \pikl^2 - \left[\sum_{k,\ell} \pikl^2\right]^2 \right) + \frac{n(n-1)(n-2)}{{{n}\choose{2}}^2} \left( \sum_{k,\ell} \pikl^3 - \left[\sum_{k,\ell} \pikl^2\right]^2 \right)          
\end{equation*}
\end{prop}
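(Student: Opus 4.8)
The plan is to compute $\sigma^2 = \mathbb{E}(MRI^2) - \theta^2$ directly from the representation \eqref{eq:MRI_prob}, using the partition of $\P\times\P$ into $\P$, $\T$, $\Q$ established in Section~\ref{sec:quadruplets}. Squaring $MRI$ gives
\begin{equation*}
MRI^2 = \frac{1}{{n\choose 2}^2}\sum_{i<j}\sum_{i'<j'} C^1_{ij}C^2_{ij}C^1_{i'j'}C^2_{i'j'},
\end{equation*}
and the double sum ranges exactly over $\P\times\P$. The first step is to split this sum according to whether the pair of pairs $\{i,j,i',j'\}$ lies in $\P$, $\T$, or $\Q$, and to identify what the product $C^1_{ij}C^2_{ij}C^1_{i'j'}C^2_{i'j'}$ becomes on each piece. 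On $\P$ the two pairs coincide, so $C^1_{ij}C^2_{ij}C^1_{i'j'}C^2_{i'j'} = (C^1_{ij})^2(C^2_{ij})^2 = C^1_{ij}C^2_{ij}$ since these are Bernoulli; Lemma~\ref{lemma:pairs} then gives expectation ${n\choose 2}\sum_{k\ell}\pikl^2$. On $\T$, the two pairs share exactly one index, so after relabelling the product is of the form $C^1_{ij}C^2_{ij}C^1_{ij'}C^2_{ij'}$, and Lemma~\ref{lemma:triplet}, Equation~\eqref{eq:Triplets_expected2}, gives expectation $n(n-1)(n-2)\sum_{k\ell}\pikl^3$. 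On $\Q$ all four indices are distinct, the two factors $C^1_{ij}C^2_{ij}$ and $C^1_{i'j'}C^2_{i'j'}$ are independent, so the expectation of the product is $\theta^2 = \big[\sum_{k\ell}\pikl^2\big]^2$; summing over $\Q$ and using $|\Q| = 6{n\choose 4}$ from Lemma~\ref{lemma:quad} gives $6{n\choose 4}\big[\sum_{k\ell}\pikl^2\big]^2$.

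Next I would assemble these three contributions. Dividing by ${n\choose 2}^2$,
\begin{equation*}
\mathbb{E}(MRI^2) = \frac{1}{{n\choose 2}}\sum_{k\ell}\pikl^2 + \frac{n(n-1)(n-2)}{{n\choose 2}^2}\sum_{k\ell}\pikl^3 + \frac{6{n\choose 4}}{{n\choose 2}^2}\Big[\sum_{k\ell}\pikl^2\Big]^2.
\end{equation*}
Then I subtract $\theta^2 = \big[\sum_{k\ell}\pikl^2\big]^2$. The cross-check that makes the final formula come out clean is the identity $|\P|^2 = |\P| + |\T| + |\Q|$, i.e. ${n\choose 2}^2 = {n\choose 2} + n(n-1)(n-2) + 6{n\choose 4}$, so that
\begin{equation*}
\frac{6{n\choose 4}}{{n\choose 2}^2} - 1 = -\frac{{n\choose 2} + n(n-1)(n-2)}{{n\choose 2}^2} = -\frac{1}{{n\choose 2}} - \frac{n(n-1)(n-2)}{{n\choose 2}^2}.
\end{equation*}
Substituting this into $\mathbb{E}(MRI^2)-\theta^2$ groups the $\big[\sum\pikl^2\big]^2$ term partly with the $\sum\pikl^2$ term (coefficient $1/{n\choose 2}$) and partly with the $\sum\pikl^3$ term (coefficient $n(n-1)(n-2)/{n\choose 2}^2$), which is exactly the claimed expression.

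The only real obstacle is bookkeeping rather than anything conceptual: one must be careful that the double sum $\sum_{i<j}\sum_{i'<j'}$ genuinely corresponds to the set $\P\times\P$ as partitioned in Section~\ref{sec:quadruplets} (in particular that the cardinalities $|\P|={n\choose 2}$, $|\T|=n(n-1)(n-2)$, $|\Q|=6{n\choose 4}$ count pairs-of-pairs with the $i<j$, $i'<j'$ conventions consistently with how the lemmas were stated), and that on the triplet piece the relevant product is the one appearing in \eqref{eq:Triplets_expected2} and not the asymmetric one in \eqref{eq:Triplets_expected}. Once the three expectations are plugged in and the partition identity ${n\choose 2}^2 = {n\choose 2} + n(n-1)(n-2) + 6{n\choose 4}$ is used to eliminate $6{n\choose 4}$, the algebra collapses immediately to the stated variance, with the two terms corresponding respectively to the pair and triplet contributions after the quadruplet term has been redistributed.
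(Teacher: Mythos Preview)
Your proof is correct and follows essentially the same approach as the paper: both arguments partition $\P\times\P$ into $\P$, $\T$, $\Q$ and invoke Lemmas~\ref{lemma:pairs} and~\ref{lemma:triplet} for the pair and triplet contributions. The only cosmetic difference is that the paper works with the covariance decomposition $\sigma^2 = {n\choose 2}^{-2}\sum_{\P\times\P}\Cov(C^1_{ij}C^2_{ij},C^1_{i'j'}C^2_{i'j'})$, which makes the $\Q$ contribution vanish immediately by independence and so bypasses your explicit use of the identity ${n\choose 2}^2 = {n\choose 2} + n(n-1)(n-2) + 6{n\choose 4}$ to redistribute the $\theta^2$ term.
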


\begin{proof}
To obtain the variance of the $MRI$, first rewrite the variance in terms of covariance:
\begin{eqnarray*}
    \sigma^2 & =  & \frac{1}{{{n}\choose{2}}^2} \ \mathbb{V} \Big( \sum_{\P\times\P} C^1_{ij} C^2_{ij}\Big) \\
    & = & \frac{1}{{{n}\choose{2}}^2} \ \Cov \Big( \sum_{\P\times\P} C^1_{ij} C^2_{ij}, \sum_{\P\times\P} C^1_{ij} C^2_{ij} \Big) \\
           & = & \frac{1}{{{n}\choose{2}}^2} \ \sum_{\P\times\P} \Cov \Big( C^1_{ij} C^2_{ij}, C^1_{i'j'} C^2_{i'j'} \Big) 
\end{eqnarray*}

We then split this final sum using our partition of $\P \times \P$. Also noticing that for all $\{i,j\}, \{i', j'\} \in \Q$ we have $\Cov(C^1_{ij} C^2_{ij}, C^1_{i'j'} C^2_{i'j'}) = 0,$ we get, 

\begin{eqnarray*}
\sigma^2 & = & \frac{1}{{{n}\choose{2}}^2} \left[ \sum_{\P} \Cov\Big(C^1_{ij} C^2_{ij}, C^1_{ij} C^2_{ij}\Big) + \sum_{\T} \ \Cov\Big(C^1_{ij} C^2_{ij}, C^1_{ij'} C^2_{ij'}\Big)\right] \\
      & = & \frac{1}{{{n}\choose{2}}} \ \mathbb{V}\Big(C^1_{ij} C^2_{i,j}\Big) + \frac{n(n-1)(n-2)}{{{n}\choose{2}}^2} \ \Cov \Big(C^1_{ij} C^2_{ij}, C^1_{ij'} C^2_{ij'} \Big) \\
      & = & \frac{1}{{{n}\choose{2}}} \ \left(\sum_{k,\ell} \pikl^2 - \left[\sum_{k,\ell} \pikl^2\right]^2 \right)+ \frac{n(n-1)(n-2)}{{{n}\choose{2}}^2} \left( \sum_{k,\ell} \pikl^3 - \left[\sum_{k,\ell} \pikl^2\right]^2 \right)
\end{eqnarray*}

We get the second line by enumerating the elements of $\P$ and $\T$. We get the third line using the definition of the covariance (for any two variable $X$ and $Y$: $\Cov(X, Y) = \mathbb{E}(XY) - \mathbb{E}(X)\mathbb{E}(Y)$) and Lemmas \ref{lemma:pairs} and \ref{lemma:triplet}.

\end{proof}

\begin{remark}
Importantly, for a fixed $\pikl$, $\sigma^2$ goes towards $0$ when $n$ grows to infinity: the larger $n$, the better the estimation of $\theta$.
\end{remark}

\subsubsection{The Rand Index depends on the number of groups}\label{sec:RIdepOnNbGrs}

In the multinomial model with uniform clusters (equal cluster size),  \cite{morey1984measurement} showed that $\theta^{RI}_0$ depends on the number of groups in $\clOne$ and $\clTwo$. This is also true for $MRI$ and easier to prove since it does not include the marginal terms of coherence by difference. We also prove the following lemma showing that if one splits a cluster of $\clOne$ or $\clTwo$ into two, the $MRI$ always decreases. Note that this latter lemma does not assume independence between classifications.

\begin{lemma} 
Consider two classifications $\clOne$ and $\clTwo$ in $K+1$ respectively $L$ clusters. Let $\clOne{'}$ be the classification obtained by fusing two clusters of $\clOne$. Then,
\[MRI(\clOne, \clTwo) \leq MRI(\clOne{'}, \clTwo).\] Also, for any distribution on $\clOne$ and $\clTwo$ we have 
\[\theta(\clOne, \clTwo) \leq \theta(\clOne{'}, \clTwo))\]
\end{lemma}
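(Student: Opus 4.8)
The plan is to reduce both inequalities to a single elementary observation: fusing two clusters of $\clOne$ can only turn some of the indicators $c^1_{ij}$ from $0$ to $1$, and never the reverse.

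First, for the deterministic inequality, I would describe $\clOne{'}$ through its contingency table with $\clTwo$: if the clusters $k_1$ and $k_2$ of $\clOne$ are merged, the row of the new cluster has entries $n_{k_1\ell}+n_{k_2\ell}$, while every other entry is unchanged. Using the combinatorial formula \eqref{eq:formRIM} together with the identity $\binom{a+b}{2}=\binom{a}{2}+\binom{b}{2}+ab$, the quantity $\binom{n}{2}\bigl(MRI(\clOne{'},\clTwo)-MRI(\clOne,\clTwo)\bigr)$ collapses to $\sum_\ell\bigl[\binom{n_{k_1\ell}+n_{k_2\ell}}{2}-\binom{n_{k_1\ell}}{2}-\binom{n_{k_2\ell}}{2}\bigr]=\sum_\ell n_{k_1\ell}\,n_{k_2\ell}\ge 0$, which gives the claim. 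Alternatively — and this is the viewpoint I would emphasize, since it makes the second part immediate — for every unordered pair $\{i,j\}$ the event ``$i$ and $j$ lie in the same cluster of $\clOne{'}$'' contains the corresponding event for $\clOne$, so $c^1_{ij}(\clOne{'})\ge c^1_{ij}(\clOne)$; multiplying by the nonnegative $c^2_{ij}$ and summing over $i<j$ yields the inequality directly from \eqref{eq:RI}. Fusing more than two clusters follows by iterating.

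Second, for the statement on $\theta$, I would apply the same merging at the level of the joint law: the distribution of $(\clOne{'},\clTwo)$ has cell probabilities $\pi_{k_1\ell}+\pi_{k_2\ell}$ on the merged row and $\pikl$ elsewhere. Since $\theta=\sum_{k\ell}\pikl^2$, the difference $\theta(\clOne{'},\clTwo)-\theta(\clOne,\clTwo)$ equals $\sum_\ell\bigl[(\pi_{k_1\ell}+\pi_{k_2\ell})^2-\pi_{k_1\ell}^2-\pi_{k_2\ell}^2\bigr]=2\sum_\ell\pi_{k_1\ell}\pi_{k_2\ell}\ge 0$. Equivalently, this is just the expectation of the pointwise inequality $C^1_{ij}(\clOne{'})\,C^2_{ij}\ge C^1_{ij}(\clOne)\,C^2_{ij}$ summed over pairs, which is exactly why no independence assumption between $\clOne$ and $\clTwo$ is required here.

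The computation is routine; the only point that needs a little care is to state precisely what ``the distribution on $\clOne{'}$'' means — namely the pushforward of the distribution on $\clOne$ under the map identifying the two fused labels — so that the two parts of the lemma are genuinely the observed and the expected versions of the same monotonicity statement. I do not anticipate a real obstacle beyond this bookkeeping.
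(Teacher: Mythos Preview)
Your proof is correct and follows essentially the same route as the paper: the contingency-table computation $\sum_\ell n_{k_1\ell}n_{k_2\ell}\ge 0$ for the first inequality, and then passing to expectations for the second. Your additional indicator-monotonicity viewpoint and the direct $\pi_{k\ell}$ computation are nice extras, but the core argument coincides with the paper's.
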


\begin{proof}
Assuming without loss of generality that clusters $1$ and $2$ were merged, we get
\begin{eqnarray*}
MRI(\clOne, \clTwo) - MRI(\clOne{'}, \clTwo) & = & \frac{1}{2{n \choose 2}} \Big(\sum_{\ell} n_{1\ell}^2 + n_{2\ell}^2 - (n_{1\ell} + n_{2\ell})^2\Big) \\
& = & -\frac{1}{{n \choose 2}} \sum_{\ell} n_{1\ell}n_{2\ell} \leq 0.
\end{eqnarray*}
Since the expectation is linear, we can consider any particular model on $\clOne$ and $\clTwo$ to get the final result.
\end{proof}

\subsection{The Adjusted version of the Rand Index }\label{sec:ARI}

Since the $(M)RI$ depends on the number of groups, it needs to be adjusted for chance. A way to do so, is to subtract its expectation under the null hypothesis $\HypNull$ \citep[as motivated in][]{brennan1974measuring,hubert1985comparing,morey1984measurement}. Ideally one would like to get  $\theta - \theta_0$ with their true values. 
Under our multinomial model this quantity is 
\[ 
\theta - \theta_0 = \sum_{k\ell}\pi_{k\ell}^2 - \sum_{k\ell}\pi_{k.}^2\pi_{.\ell}^2
\]
which is equal to zero under $\HypNull$ (independence of the classifications), that is, when $\pik \pil = \pikl$ for all $k,\ell$. In practice, one can only estimate the quantities $\theta - \theta_0$ from observed classifications. Our goal is therefore to get an unbiased estimator of $\theta - \theta_0$. 

The $MRI$ being by definition an unbiased estimator of $\theta$, we only need an unbiased estimator of $\theta_0$, that is $\sum_{k\ell} \pi_{k.}^2 \pi_{.\ell}^2$. However, under the alternative $\HypAlt$ (i.e. when the compared classifications are not independent, the most natural case), deriving an unbiased estimator of $\theta_0$ is trickier and depends on the model assumption. \cite{morey1984measurement} proposed a plug-in estimator for the multinomial model, but as pointed out by \cite{hubert1985comparing, steinley2004properties,steinley2018note}, they made errors in their calculations. In the next section we continue their work by proposing an unbiased estimator for $\theta_0$. We also show that the hypergeometric estimator of \cite{hubert1985comparing} for $\theta_0$, used as correction in the "traditional" $ARI$,  is biased under our multinomial $\HypAlt$.

\paragraph*{A new Adjusted  Rand Index.} We now define our own adjusted version of the $MRI$ that we denote $MARI$:
\begin{equation}
    \label{eq:MARI}    
    MARI = \widehat{\theta} - \widehat{\theta}_0. 
\end{equation}
with 
$$\widehat{\theta} = \sum_{\P} C_{ij}^1 C_{ij}^2 \Big/ {n \choose 2} \qquad \widehat{\theta}_0 = \sum_{\Q} C_{ij}^1 C_{i'j'}^2 \Big/ 6 {n \choose 4}. $$

and its observed value, 

  $$MARI^{obs} = \sum_{\P} c_{ij}^1 c_{ij}^2 \Big/ {n \choose 2} - \sum_{\Q} c_{ij}^1 c_{i'j'}^2 \Big/ 6 {n \choose 4}.$$

where we recall that $c_{ij}^1$ and $c_{ij}^2$ are the observed counterparts of $C^1_{ij}, C^2_{ij}$ and $\P, \Q$ are defined in  Section \ref{sec:quadruplets}.

\begin{lemma}
Under the multinomial model, the $MARI$ is unbiased, that is,
$$\mathbb{E}(MARI) = \theta - \theta_0.$$
\end{lemma}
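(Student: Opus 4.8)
The plan is to show that $MARI = \widehat{\theta} - \widehat{\theta}_0$ is unbiased by verifying separately that $\mathbb{E}(\widehat{\theta}) = \theta$ and $\mathbb{E}(\widehat{\theta}_0) = \theta_0$, then invoking linearity of expectation. The first identity is immediate: by definition $\widehat{\theta} = \sum_{\P} C^1_{ij} C^2_{ij} / \binom{n}{2}$, and Lemma~\ref{lemma:pairs} gives $\mathbb{E}\big(\sum_{\P} C^1_{ij} C^2_{ij}\big) = \binom{n}{2} \sum_{k\ell} \pikl^2 = \binom{n}{2}\,\theta$, so dividing by $\binom{n}{2}$ yields $\mathbb{E}(\widehat{\theta}) = \theta$. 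This was essentially already recorded in the Proposition on the expectation of the $MRI$.

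The second identity is the real content. I would apply Lemma~\ref{lemma:quad}, which states $\mathbb{E}\big(\sum_{\Q} C^1_{ij} C^2_{i'j'}\big) = 6\binom{n}{4} \sum_{k,\ell} \pik^2 \pil^2$. Here the key point is that for a quadruplet $\{i,j,i',j'\} \in \Q$ the four indices are distinct, so $C^1_{ij}$ and $C^2_{i'j'}$ involve disjoint pairs of individuals whose classes are drawn independently under the multinomial model; hence $\mathbb{E}(C^1_{ij} C^2_{i'j'}) = \mathbb{P}(C^1_{ij}=1)\,\mathbb{P}(C^2_{i'j'}=1) = \big(\sum_k \pik^2\big)\big(\sum_\ell \pil^2\big) = \sum_{k,\ell}\pik^2\pil^2$ by \eqref{eq:basicProbsOnC}. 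Dividing by $6\binom{n}{4}$ gives $\mathbb{E}(\widehat{\theta}_0) = \sum_{k,\ell}\pik^2\pil^2 = \theta_0$. Then $\mathbb{E}(MARI) = \mathbb{E}(\widehat{\theta}) - \mathbb{E}(\widehat{\theta}_0) = \theta - \theta_0$.

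The one subtlety worth flagging — and the only place where anything could go wrong — is the claim $\mathbb{E}(\widehat{\theta}_0) = \theta_0$ \emph{under the alternative} $\HypAlt$, i.e. when $\clOne$ and $\clTwo$ are dependent. The estimator $\widehat{\theta}_0$ is built from quadruplets precisely so that it estimates $\sum_{k,\ell}\pik^2\pil^2$ regardless of whether $\pikl$ factorizes: the independence being exploited is the independence \emph{across individuals} $i,j,i',j'$ (a structural feature of the multinomial sampling of $n$ i.i.d.\ points), not any assumed independence between the two clusterings of a single individual. So the proof should make explicit that $\widehat{\theta}_0$ uses only products of the form $C^1_{ij}C^2_{i'j'}$ with all four indices distinct, which is exactly why Lemma~\ref{lemma:quad} applies with no independence assumption on $(C^1_i, C^2_i)$. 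Once that is said, the proof is a two-line consequence of Lemmas~\ref{lemma:pairs} and~\ref{lemma:quad} together with linearity of expectation; there is no genuine obstacle, only the risk of conflating the two notions of independence.
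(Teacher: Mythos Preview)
Your proposal is correct and matches the paper's own proof, which is a one-line invocation of Lemma~\ref{lemma:quad} (the paper having already recorded $\mathbb{E}(\widehat{\theta}) = \theta$ in the earlier Proposition on the expectation of the $MRI$). Your added remark distinguishing independence across individuals from independence between the two clusterings is a worthwhile clarification but not an additional ingredient.
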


\begin{proof}
    The proof is straightforward using Lemma \ref{lemma:quad}.
\end{proof}

\paragraph*{Computing the $MARI$ from a contingency table.} In practice, the comparison of two classifications is given as a contingency table as Table \ref{tab:n_kl Contigency}, and we thus need a formulation of the $MARI$ defined in \eqref{eq:MARI} as a function of $n_{k\ell}$.

We already gave in \eqref{eq:formRIM} an expression of $\widehat{\theta}$  
as a function of $n_{k\ell}$. As we will see, $\widehat{\theta}_0$ can as well 
be computed from the $\nkl$ contingency Table \ref{tab:n_kl Contigency} even if summing over all elements of $\Q$ rather than $\P$ is a bit less straightforward. To get $\sum_{ \Q } c^1_{ij} c^2_{i'j'}$, we will use the term $\sum_{k\ell}\nk^2\nl^2$ from which we will, as a direct result of Definition \eqref{eq:cVarCounts}, derive the $(\sum_{\P}\cvarREALONE)(\sum_{\P} \cVarREALTWOPrime)$ terms. These latter can be decomposed as follows:

\begin{equation}
\label{eqn:allNtupltsSimilar}
(\sum_{\P} \cvarREALONE)(\sum_{\P} \cVarREALTWOPrime) = \sum_{ \P}\VarREALPairs + \sum_{\T} c^1_{ij} c^2_{ij'} + \sum_{ \Q }\VarREALQuadruplets.    
\end{equation}
It is then sufficient to subtract the terms of $\P$ and $\T$ from the left side of Equation \eqref{eqn:allNtupltsSimilar} to get $\sum_{ \Q } c^1_{ij} c^2_{i'j'}$.
 All terms summing over $\P$ are easy to recover (see Definition~\ref{eq:cVarCounts}).
However, the terms involving elements of $\T$ are more tedious to obtain and are derived in Lemma~\ref{lemma:counttriplets}. The terms of $\Q$ derived in Lemma~\ref{lemma:countquadruplets}.

\begin{lemma}\label{lemma:counttriplets} We have the following expression of $\sum_{\T} c^1_{ij} c^2_{ij'}$ in terms of $n_{k\ell}$:
 \begin{eqnarray*}
\sum_{\T} c^1_{ij} c^2_{i'j}   =  
 2n + \sum_{k,\ell} n_{k.}n_{k\ell}n_{.\ell} - \sum_{k, \ell} n_{k\ell}^2 - \sum_k n_{k.}^2 - \sum_{\ell} n_{.\ell}^2
 \end{eqnarray*}
\end{lemma}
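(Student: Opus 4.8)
The plan is to relate the sum over ordered triplets directly to the combinatorial counting identities in \eqref{eq:cVarCounts}, by summing products of the kind $c^1_{ij}c^2_{ij'}$ over the individuals they involve. First I would write $\sum_{\T} c^1_{ij} c^2_{ij'}$ as a sum over ordered triplets of \emph{distinct} indices. The term $c^1_{ij}c^2_{i'j}$ (with $i,i',j$ distinct) equals $1$ precisely when $i$ and $j$ lie in the same cluster of $\clOne$ and $i'$ and $j$ lie in the same cluster of $\clTwo$; the ``pivot'' is $j$. So for a fixed $j$ with $c^1_j=k$ and $c^2_j=\ell$, the number of valid $(i,i')$ with $i,i'$ distinct and both distinct from $j$ is $(\nk-1)(\nl-1)$ minus the correction for the case $i=i'$. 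I would make this bookkeeping precise: summing $(\nk-1)(\nl-1)$ over all $n$ individuals $j$ overcounts the configurations where $i=i'$ (which forces $c^1_i=k$ and $c^2_i=\ell$, i.e.\ $i$ in cell $(k,\ell)$), so one subtracts $\sum_{k,\ell}\nkl(\nkl-1)$ or the appropriate variant. Grouping individuals by their cell $(k,\ell)$ gives $\sum_{k,\ell}\nkl\big[(\nk-1)(\nl-1) - (\text{diagonal correction})\big]$.

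Then I would expand $\sum_{k,\ell}\nkl(\nk-1)(\nl-1) = \sum_{k,\ell}\nkl\nk\nl - \sum_{k,\ell}\nkl\nk - \sum_{k,\ell}\nkl\nl + \sum_{k,\ell}\nkl$ and simplify the marginal sums using $\sum_\ell \nkl=\nk$, $\sum_k \nkl=\nl$, and $\sum_{k,\ell}\nkl=n$. This collapses $\sum_{k,\ell}\nkl\nk = \sum_k \nk^2$, $\sum_{k,\ell}\nkl\nl=\sum_\ell \nl^2$, and $\sum_{k,\ell}\nkl=n$. Combining with the diagonal correction term, which produces $-\sum_{k,\ell}\nkl^2$ plus a further multiple of $n$, I would collect all the lower-order $n$ contributions and check they sum to the stated $2n$. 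The final expression should then read $2n + \sum_{k,\ell}\nk\nkl\nl - \sum_{k,\ell}\nkl^2 - \sum_k \nk^2 - \sum_\ell \nl^2$, matching the claim.

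The main obstacle is the careful handling of the distinctness constraints, i.e.\ getting the ``diagonal correction'' exactly right and tracking every $O(n)$ term so that the constant in front of $n$ comes out to $2$ rather than $1$ or $3$. It is easy to miscount whether one is working with ordered or unordered triplets, and whether the case $i=i'$ (the only coincidence possible once the pivot is fixed) is subtracted once or twice; a clean way to avoid errors is to expand $\sum_j (\nk[j]-1)(\nl[j]-1)$ where $k[j],\ell[j]$ denote the clusters of $j$, rewrite this as a sum over cells with multiplicity $\nkl$, and only then subtract the single correction $\sum_{k,\ell}\nkl(\nkl-1)$ coming from $i=i'$. A sanity check on a small example (e.g.\ $n=3$ with a diagonal contingency table) would confirm the constant. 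Everything else is routine algebra using \eqref{eq:cVarCounts} and the marginal identities.
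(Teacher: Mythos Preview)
Your proposal is correct and follows essentially the same route as the paper: both fix the shared (``pivot'') index in its cell $(k,\ell)$, count the valid ordered pairs of the remaining two indices, multiply by $n_{k\ell}$, and sum over $k,\ell$. The only cosmetic difference is that the paper obtains the per-pivot count $n_{k.}n_{.\ell}+2-n_{k\ell}-n_{k.}-n_{.\ell}$ via a four-case table (splitting on whether $j$ also lies in $\ell$ and whether $j'$ also lies in $k$), whereas your inclusion--exclusion $(n_{k.}-1)(n_{.\ell}-1)-(n_{k\ell}-1)$ reaches the same expression more directly.
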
{}

\begin{proof}

We need to consider all $i$ in $\{1, ..., n\}$. Assuming for now that $i$ is in classes $(k,\ell)$, that is  $c^1_i=k$ and $c^2_i=\ell$, let us consider all $j, j'$ such that $c^1_{ij}c^2_{ij'}=1$. 
The term $c^1_{ij}c^2_{ij'}$ is equal to one if $c^1_j=k$ and $c^2_{j'} = \ell$. We then get different scenarios according to whether $c^1_{j'} = k$ or not and whether $c^2_{j} = \ell$.  Those scenarios are enumerated in Table \ref{tab:scenario}.

\begin{table}[ht!]
\centering
\caption{Four scenarios to be considered for $j$ and $j'$ in the calculation of the terms in$\sum_{\T} c^1_{ij}c^2_{ij'}$ when $i$ is in class $(k, \ell).$}\label{tab:scenario} 
\begin{tabular}{|c|c|c|}
\hline
& $j$ in $\ell$  & $j$  not in $\ell$ \\ \hline
$j'$ in $k$ & $(n_{k\ell}-1)(n_{k\ell}-2)$     &     $(n_{k\ell} - 1)(n_{.\ell} - n_{k\ell}) $ 
\\ \hline
$j'$ not in $k$  & $(n_{k.} - n_{k\ell})(n_{k\ell}-1)$     &     $(n_{k.} - n_{k\ell})(n_{.\ell} - n_{k\ell})$  \\ \hline
\end{tabular}

\end{table}

Summing all terms of Table \ref{tab:scenario} we get $n_{k.}n_{.\ell} + 2 - n_{k\ell} - n_{k.} - n_{.\ell}$. To account for all $i$ belonging to class $(k, \ell)$ we then multiply by $n_{k\ell}$. Eventually we sum over all $k, \ell$ to recover
\begin{eqnarray*}
\sum_{\T} c^1_{ij} c^2_{ij'} & = & \sum_{k, \ell} n_{k\ell}  ( 2 + n_{k.}n_{.\ell} - n_{k\ell} - n_{k.} - n_{.\ell} ) \\
& = & 2n + \sum_{k,\ell} n_{k.}n_{k\ell}n_{.\ell} - \sum_{k, \ell} n_{k\ell}^2 - \sum_k n_{k.}^2 - \sum_{\ell} n_{.\ell}^2
 \end{eqnarray*}
 \end{proof}

\begin{lemma}
\label{lemma:countquadruplets}
We have the following expression of $\sum_{\Q} c^1_{ij} c^2_{i'j}$ in terms of $n_{k\ell}$:

\begin{multline*}
        \sum_{\Q} c^1_{ij} c^2_{i'j'}  
         = \\ \bigg[\sum_{k\ell} \nk^2\nl^2 -  \bigg(4  \sum_{k\ell}{\nkl\choose{2}} +  4(2n + \sum_{k,\ell} n_{k.}n_{k\ell}n_{.\ell} - \sum_{k, \ell} n_{k\ell}^2 - \sum_k n_{k.}^2 - \sum_{\ell} n_{.\ell}^2) \\
        \qquad\qquad\qquad + 2n \big(\sum_{k} {\nk \choose{2}} + \sum_{\ell} {\nk \choose{2}} \big) + n^2\bigg)\bigg]\bigg/4
    \end{multline*}
\end{lemma}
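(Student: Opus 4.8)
The plan is to read off $\sum_{\Q}\VarREALQuadruplets$ directly from the decomposition~\eqref{eqn:allNtupltsSimilar}. Solving that identity for the quadruplet sum gives
\[
\sum_{\Q}\VarREALQuadruplets = \Big(\sum_{\P}\cvarREALONE\Big)\Big(\sum_{\P}\cVarREALTWOPrime\Big) \;-\; \sum_{\P}\VarREALPairs \;-\; \sum_{\T} c^1_{ij}c^2_{ij'},
\]
so it suffices to rewrite each of the three terms on the right-hand side as a function of the $n_{k\ell}$ and then collect.

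The two ``simple'' terms are handled immediately. By Definition~\eqref{eq:cVarCounts}, $\sum_{\P}\VarREALPairs = \sum_{k\ell}{\nkl\choose 2}$, and $\sum_{\T}c^1_{ij}c^2_{ij'}$ is exactly the quantity computed in Lemma~\ref{lemma:counttriplets} --- here one uses that $\sum_{\T}c^1_{ij}c^2_{ij'}$ and $\sum_{\T}c^1_{ij}c^2_{i'j}$ coincide, since the index set $\T$ is symmetric in the two pairs, so that lemma applies verbatim. For the product term, Definition~\eqref{eq:cVarCounts} gives $\sum_{\P}\cvarREALONE = \sum_k {n_{k.}\choose 2}$ and $\sum_{\P}\cVarREALTWOPrime = \sum_\ell {n_{.\ell}\choose 2}$; writing ${n_{k.}\choose 2} = \tfrac12(n_{k.}^2 - n_{k.})$ and likewise for the columns, multiplying out the two sums, and using $\sum_k n_{k.} = \sum_\ell n_{.\ell} = n$ together with $\sum_{k\ell} n_{k.}^2 n_{.\ell}^2 = \big(\sum_k n_{k.}^2\big)\big(\sum_\ell n_{.\ell}^2\big)$, one obtains
\[
\Big(\sum_{\P}\cvarREALONE\Big)\Big(\sum_{\P}\cVarREALTWOPrime\Big) = \frac14 \sum_{k\ell} n_{k.}^2 n_{.\ell}^2 \;-\; \frac{n}{2}\Big(\sum_k {n_{k.}\choose 2} + \sum_\ell {n_{.\ell}\choose 2}\Big) \;-\; \frac{n^2}{4}.
\]

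Substituting these three expressions into the formula for $\sum_{\Q}\VarREALQuadruplets$ and multiplying through by $4$ produces, term by term, the contributions $\sum_{k\ell}n_{k.}^2 n_{.\ell}^2$, then $-4\sum_{k\ell}{\nkl\choose 2}$, then $-4\big(2n + \sum_{k\ell}n_{k.}n_{k\ell}n_{.\ell} - \sum_{k\ell}n_{k\ell}^2 - \sum_k n_{k.}^2 - \sum_\ell n_{.\ell}^2\big)$, and finally $-\big(2n(\sum_k {n_{k.}\choose 2} + \sum_\ell {n_{.\ell}\choose 2}) + n^2\big)$; restoring the overall factor $1/4$ gives precisely the claimed identity. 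The argument is essentially bookkeeping, and there is no real obstacle: the only delicate point is the expansion of the product $\big(\sum_k {n_{k.}\choose 2}\big)\big(\sum_\ell {n_{.\ell}\choose 2}\big)$ --- in particular tracking the cross terms coming from the linear parts of the two binomials and collapsing the marginal sums to $n$ --- and making sure the triplet term from Lemma~\ref{lemma:counttriplets} enters with the correct sign and multiplicity $4$.
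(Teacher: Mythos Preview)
Your proof is correct and follows essentially the same approach as the paper: both use the decomposition~\eqref{eqn:allNtupltsSimilar} together with the identity $\sum_k n_{k.}^2 = n + 2\sum_{\P} c^1_{ij}$ (and its column analogue) to relate $\big(\sum_{\P} c^1_{ij}\big)\big(\sum_{\P} c^2_{i'j'}\big)$ to $\sum_{k\ell} n_{k.}^2 n_{.\ell}^2$, and then plug in the pair and triplet counts from~\eqref{eq:cVarCounts} and Lemma~\ref{lemma:counttriplets}. The only cosmetic difference is the order of operations --- the paper first writes $\sum_{k\ell} n_{k.}^2 n_{.\ell}^2$ in terms of $\sum_{\P\times\P} c^1_{ij} c^2_{i'j'}$ and then splits the latter, whereas you first isolate $\sum_{\Q}$ from~\eqref{eqn:allNtupltsSimilar} and then expand the product of binomial sums --- but the computation is the same.
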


\begin{proof} 
From Equation \eqref{eq:cVarCounts} we can derive $\sum_{k\ell}\nk^2\nl^2$ as a function of $\sum_{\P\times\P} c^1_{ij} c^2_{i'j'}$ and $n$, since, $\sum_k \nk^2 = n + 2 \sum_\P \cvarREALONE$ and $\sum_\ell \nl^2 = n + 2 \sum_{\P}\cVarREALTWO$ with,

\begin{equation}
    \begin{split}
        \sum_{k\ell}\nk^2 \nl^2 & =  (2 \sum_{i < j} \cvarREALONE + n)(2 \sum_{i' < j'} \cVarREALTWOPrime + n)  \\
               & = 4  \sum_{\P\times\P} \VarREALQuadruplets + 2n \bigg(\sum_{\P} \cvarREALONE + \sum_{\P} \cVarREALTWOPrime \bigg) + n ^2 \\
    \end{split}
\end{equation}

Using equation \eqref{eqn:allNtupltsSimilar}, we decompose $\sum_{\P\times\P} \VarREALQuadruplets$ into terms of $\P$, $\T$ and $\Q$ and get, 

\begin{multline*}
    \sum_\Q \VarREALQuadruplets = \bigg[\sum_{k\ell} \nk^2\nl^2 -  \bigg(4  \sum_{\P}\VarREALPairs +  4\sum_{\T}\VarREALTriplets  + 2n \big(\sum_{\P} \cvarREALONE + \sum_{\P} \cVarREALTWO \big) + n ^2\bigg)\bigg]\bigg/4\\
     \quad\quad\quad\quad = \bigg[\sum_{k\ell} \nk^2\nl^2 -  \bigg(4  \sum_{k\ell}{\nkl\choose{2}} +  4(2n + \sum_{k,\ell} n_{k.}n_{k\ell}n_{.\ell} - \sum_{k, \ell} n_{k\ell}^2 - \sum_k n_{k.}^2 - \sum_{\ell} n_{.\ell}^2)\\
     + 2n \big(\sum_{k} {\nk \choose{2}} + \sum_{\ell} {\nk \choose{2}} \big) + n ^2\bigg)\bigg]\bigg/4
\end{multline*}
\end{proof}

\section{Implementation - package \texttt{aricode}}\label{sec:aricode}

We implemented code for fast  computation of the $MRI$ and its adjusted version the $MARI$, as well as a number of other clustering comparison measures in the \texttt{R/C++} package \texttt{aricode}, which is available on CRAN. 

Computing these measures is  straightforward by means of the whole $K\times L$ contingency table. However, the time and space complexity  is in $\mathcal{O}(n + KL)$, which is somewhat inefficient when $K$ and $L$ are large. Our implementation in \texttt{aricode} is in $\mathcal{O}(n)$: the key idea  is that, given $n$ observations, at most $n$ elements  of the $n_{k\ell}$ contingency matrix can be non zero. To recover these non zero elements one can proceed in two simple steps: first, all observations are sorted in lexicographical order in terms of their first and second cluster index. This can be done in $\mathcal{O}(n)$ using  \texttt{bucket sort} \citep{Cormen2001introduction} or \texttt{radix sort} (as implemented in \texttt{R} \citep{Rproject}). Note that once the observations are sorted, all $i$ that are in clusters $k$ and $\ell$ are one after the other in the data table. Thus, in a second step  \texttt{aricode} counts all non zero $n_{kl}$ in a single path over the data table. Internally this is done using \texttt{Rcpp} \citep{eddelbuettel2011rcpp}.

In Figure \ref{fig:TimingAricodeVsMclust} we compare our implementation of the standard ARI with the implementation of \texttt{mclust} \citep{mclust} (that uses the whole contingency table). As can be noted, the cost of the latter can be prohibitive for large vectors.

\begin{figure}[htbp!]
    \centering
    \includegraphics[scale=0.5]{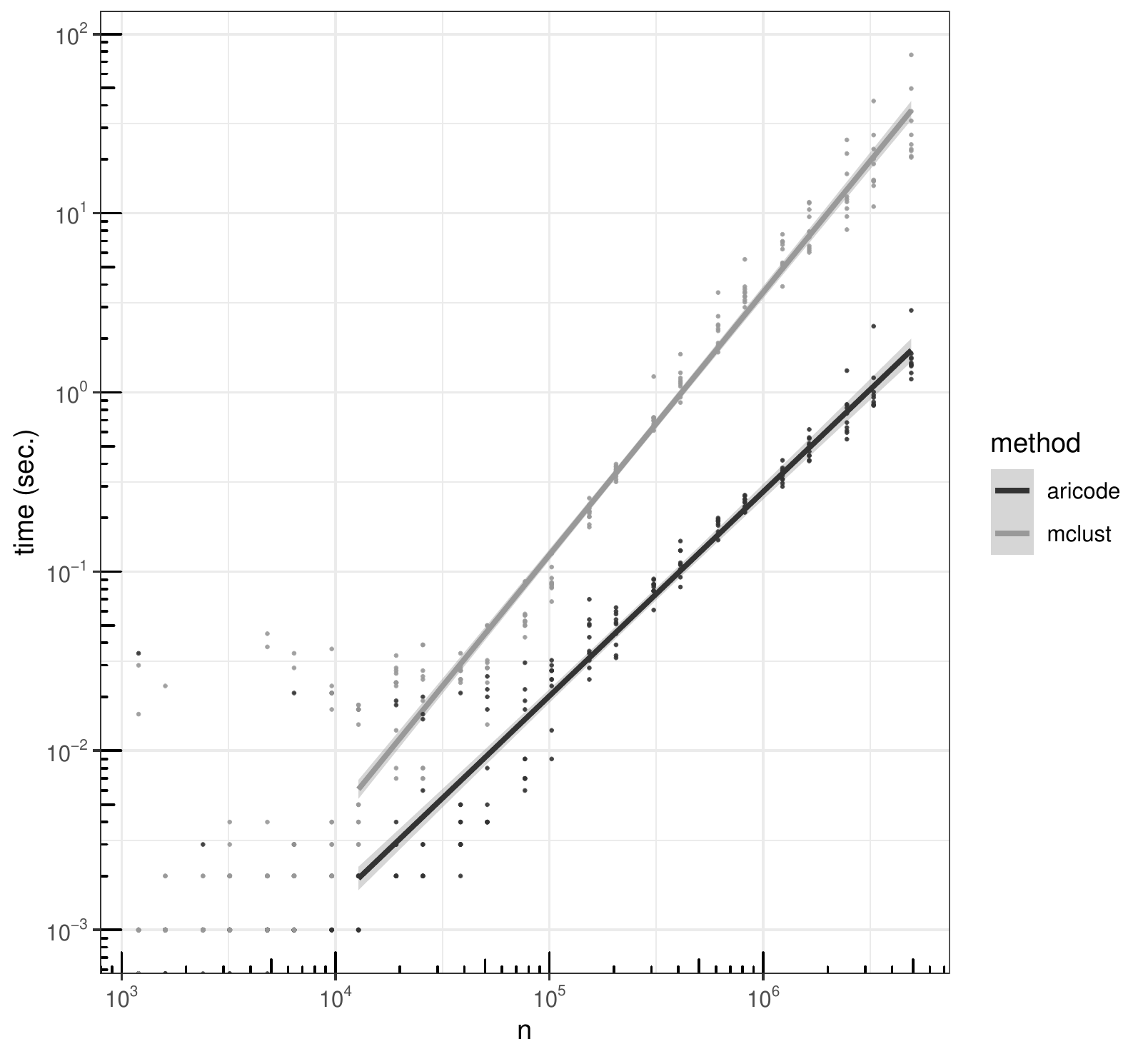}
    \caption{Timings comparing the cost of computing the ARI with \texttt{aricode} or with the commonly used function \texttt{adjustedRandIndex} of the \texttt{mclust} package.}
  \label{fig:TimingAricodeVsMclust}
\end{figure}

\section{\citeauthor{hubert1985comparing}'s ARI}\label{sec:HubertARI}

In this section we study the expectation of the 'standard' $RI$ of \cite{brennan1974measuring} (by contrast with our $MRI$); the expression of which results from the hypergeometric model. This expression was used by \cite{hubert1985comparing} for adjusting the $RI$ and producing the usual $ARI$.  We study this expected value 
when the expectation corresponds to the multinomial distribution. We show that this estimator is biased in general under 
the alternative hypothesis, that is, when the two compared clusterings are not independent.

\subsection{Expectation of \citeauthor{hubert1985comparing}'s $ARI$}\label{sec:HubertARIExp}

Consider the observed value of the $ARI$ proposed by \cite{brennan1974measuring, hubert1985comparing}: in order to analyse this quantity in our multinomial setup, we first give its definition in terms of $c^1_{ij}$ and $c^2_{ij}$, that is
\begin{eqnarray*}
\ARIOld^{\text{obs}}  & = 
  & \frac{2}{{n\choose{2}}} \sum^{KL}_{kl} {n_{kl}\choose{2}} -\frac{2}{{n\choose{2}}^2} \sum^{KL}_{kl}{n_{k.}\choose{2}}{n_{.l}\choose{2}} \\
& = & \frac{2}{{n\choose{2}}} \sum_{\P} c^1_{ij}c^2_{ij} -\frac{2}{{n\choose{2}}^2} \sum_{\P} c^1_{ij} \sum_{\P} c^2_{ij},
\end{eqnarray*}
where we recall that $c^1_{ij}$,$c^2_{ij}$ are realisations of the Bernoulli variables $C^1_{ij}, C^2_{ij}$. In a probabilistic perspective, we consider the $ARI$ as a random variable:
\begin{equation}
ARI = \underbrace{\frac{2}{{n\choose{2}}} \sum_{\P} C^1_{ij}C^2_{ij}}_{\widehat{\theta}^{RI}} -\underbrace{\frac{2}{{n\choose{2}}^2} \sum_{\P} C^1_{ij} \sum_{\P} C^2_{ij}}_{\widehat{\theta}^{RI}_0},
\end{equation}
where, as for the $MRI$, we ignored the marginal terms in our definitions of  $\widehat{\theta}^{RI}$ and  $\widehat{\theta}^{RI}_0$ that cancel in the $ARI$. We now claim the following proposition.

\begin{prop}
Under the multinomial model we have

$$
\mathbb{E}(ARI) = \mathbb{E}(\hat{\theta}^{RI}) - \mathbb{E}(\hat{\theta}^{RI}_{0}), $$
with 
\begin{gather*}
\mathbb{E}(\hat{\theta}^{RI}) = 2\sum_{k\ell}^{KL}\pikl^2 \text{ and }\\
\mathbb{E}(\hat{\theta}^{RI}_{0}) = \frac{2}{{n\choose{2}}^2}\bigg[{n\choose 2}\sum_{k\ell}^{KL}\pikl^2 + n(n-1)(n-2)\sum_{k\ell}^{KL}\pikl\pik\pil + 6{n\choose 4} \sum_{k\ell}^{KL}\pik^2\pil^2 \bigg]
\end{gather*}

Assuming we are under the null this simplifies so that $\mathbb{E}_{\HypNull}(\ARIOld) = 0$.

\end{prop}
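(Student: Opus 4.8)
The plan is to compute the two expectations $\mathbb{E}(\hat\theta^{RI})$ and $\mathbb{E}(\hat\theta^{RI}_0)$ separately, reusing the three lemmas on pairs, triplets and quadruplets. The first one is immediate: $\hat\theta^{RI} = \frac{2}{\binom{n}{2}}\sum_{\P} C^1_{ij}C^2_{ij}$ is just twice $\hat\theta = MRI$, so by Lemma~\ref{lemma:pairs} we get $\mathbb{E}(\hat\theta^{RI}) = 2\sum_{k\ell}\pikl^2$ directly.

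The substantive computation is $\mathbb{E}(\hat\theta^{RI}_0)$ where $\hat\theta^{RI}_0 = \frac{2}{\binom{n}{2}^2}\big(\sum_\P C^1_{ij}\big)\big(\sum_\P C^2_{i'j'}\big)$. First I would expand the product of the two sums as a sum over $\P\times\P$, namely $\sum_{\{i,j\},\{i',j'\}\in\P\times\P} C^1_{ij}C^2_{i'j'}$, and split it according to the partition $\P\times\P = \P \cup \T \cup \Q$ introduced in Section~\ref{sec:quadruplets}. On the diagonal block $\P$ we have $\{i',j'\}=\{i,j\}$ so the term is $C^1_{ij}C^2_{ij}$, whose expectation summed over $\P$ is $\binom{n}{2}\sum_{k\ell}\pikl^2$ by Lemma~\ref{lemma:pairs}. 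On the triplet block $\T$ the generic term is $C^1_{ij}C^2_{ij'}$ (one shared index), and Equation~\eqref{eq:Triplets_expected} of Lemma~\ref{lemma:triplet} gives expectation $n(n-1)(n-2)\sum_{k\ell}\pikl\pik\pil$. On the quadruplet block $\Q$ all four indices are distinct and independent, so Equation~\eqref{eq:Quadruplets_expected} of Lemma~\ref{lemma:quad} gives $6\binom{n}{4}\sum_{k\ell}\pik^2\pil^2$. Summing the three contributions and multiplying by $2/\binom{n}{2}^2$ yields exactly the claimed expression for $\mathbb{E}(\hat\theta^{RI}_0)$.

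For the last sentence, under $\HypNull$ we substitute $\pikl = \pik\pil$ everywhere. Then $\mathbb{E}(\hat\theta^{RI}) = 2\sum_{k\ell}\pik^2\pil^2 = 2\big(\sum_k\pik^2\big)\big(\sum_\ell\pil^2\big)$, and likewise each of the three terms in the bracket for $\mathbb{E}(\hat\theta^{RI}_0)$ collapses to a multiple of $\big(\sum_k\pik^2\big)\big(\sum_\ell\pil^2\big)$: the $\P$-term becomes $\binom{n}{2}\sum_{k\ell}\pik^2\pil^2$, the $\T$-term becomes $n(n-1)(n-2)\sum_{k\ell}\pik^2\pil^2$ (since $\pikl\pik\pil = \pik^2\pil^2$), and the $\Q$-term is already $6\binom{n}{4}\sum_{k\ell}\pik^2\pil^2$. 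The bracket therefore equals $\big(\binom{n}{2} + n(n-1)(n-2) + 6\binom{n}{4}\big)\sum_{k\ell}\pik^2\pil^2$, and since $|\P|+|\T|+|\Q| = |\P|^2 = \binom{n}{2}^2$ this is exactly $\binom{n}{2}^2\sum_{k\ell}\pik^2\pil^2$; multiplying by $2/\binom{n}{2}^2$ gives $\mathbb{E}_{\HypNull}(\hat\theta^{RI}_0) = 2\sum_{k\ell}\pik^2\pil^2 = \mathbb{E}_{\HypNull}(\hat\theta^{RI})$, hence $\mathbb{E}_{\HypNull}(ARI)=0$.

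I do not anticipate a genuine obstacle here: the whole argument is bookkeeping on the $\P/\T/\Q$ partition, and every piece is supplied by Lemmas~\ref{lemma:pairs}--\ref{lemma:quad}. The one point demanding a little care is making sure the combinatorial identity $\binom{n}{2} + n(n-1)(n-2) + 6\binom{n}{4} = \binom{n}{2}^2$ is invoked (it is exactly $|\P|+|\T|+|\Q|=|\P|^2$ from Section~\ref{sec:quadruplets}), so that the null-case simplification is clean rather than relying on an opaque algebraic cancellation; that is the step I would write out most explicitly.
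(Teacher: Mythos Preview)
Your proposal is correct and follows essentially the same route as the paper: decompose $\big(\sum_\P C^1_{ij}\big)\big(\sum_\P C^2_{i'j'}\big)$ along the partition $\P\cup\T\cup\Q$, apply Lemmas~\ref{lemma:pairs}--\ref{lemma:quad}, and under $\HypNull$ use the identity $|\P|+|\T|+|\Q|=\binom{n}{2}^2$ to collapse the bracket. The only cosmetic difference is that you invoke the cardinality identity explicitly where the paper writes out the numerical version $\binom{n}{2}+n(n-1)(n-2)+6\binom{n}{4}=\binom{n}{2}^2$.
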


\begin{proof} Using Lemma~\ref{lemma:pairs}, we have

$$\mathbb{E}(\sum_{\P} C^1_{ij} C^2_{ij}) = {n\choose{2}}\sum_{k\ell}^{KL}\pikl^2.$$

Using Definition~\eqref{eqn:allNtupltsSimilar} and Lemmas \ref{lemma:pairs}, \ref{lemma:triplet}, \ref{lemma:quad} we obtain

$$\mathbb{E}(\sum_{\P} C^1_{ij} \sum_{\P} C^2_{ij}) = {n\choose 2}\sum_{k\ell}^{KL}\pikl^2 + n(n-1)(n-2)\sum_{k\ell}^{KL}\pikl\pik\pil + 6{n\choose 4} \sum_{k\ell}^{KL}\pik^2\pil^2. $$

Under the null we have $\pikl^2=\pik^2\pil^2$ and we get

$$\mathbb{E}_\HypNull(\sum_{\P} C^1_{ij} C^2_{ij}) = {n\choose 2}\sum_{k\ell}^{KL}\pik^2\pil^2$$

\begin{eqnarray*}
\mathbb{E}_\HypNull(\sum_{\P} C^1_{ij} \sum_{\P} C^2_{ij}) & = & \sum_{k\ell}^{KL}\pik^2\pil^2\bigg[{n\choose 2} + n(n-1)(n-2) + 6{n\choose 4} \bigg] \\
&=& {n\choose 2}^2\sum_{k\ell}^{KL}\pik^2\pil^2.
\end{eqnarray*}

The expectations $\mathbb{E}(\hat{\theta}^{RI})$ and $\mathbb{E}(\hat{\theta}^{RI}_0)$ are obtained by scaling respectively with $2/{n\choose{2}}$ and $2/{n\choose{2}}^2$;  $\mathbb{E}(ARI)$ is their difference.

\end{proof}

From these results we conclude that \citeauthor{hubert1985comparing}'s $ARI$ is biased  under the multinomial model in general, since the term used for the adjustment is biased as $\mathbb{E}(\widehat{\theta}^{RI}_0) \neq \theta^{RI}_0$. Note, however, that this estimator is not biased under the null $\HypNull$.

\subsection{Study of the bias \citeauthor{hubert1985comparing} 's $ARI$}\label{section:bias}

The quantity that we study in this section is
\begin{equation*}
  \label{eqn:HubertBias}
  \begin{split}
\text{bias}_n(\theta^{RI}_0) & = \theta^{RI}_0 - \mathbb{E}(\widehat{\theta}^{RI}_0)\\
& = \sum_{k,\ell}^{K,L}\pi_{k.}^2\pi_{.\ell}^2 - \Big[{n\choose{2}} \sum_{kl}^{K,L}\pi_{k\ell}^2 +  6{n\choose{3}}\sum_{k\ell}^{K,L}\pi_{k\ell}\pi_{.\ell}\pi_{k.} + 6{n\choose{4}}\sum_{k\ell}^{K,L}\pi_{k.}^2\pi_{.\ell}^2 \Big] \bigg/{n\choose{2}}^2
    \end{split}
\end{equation*}

\paragraph{Bias disappear when $n$ goes to infinity.}

The bias can be rewritten as

\begin{equation*}
\label{eq:biasRewritten}
\begin{split}
\text{bias}_n(\theta^{RI}_0) = \frac{4n - 6}{n(n-1)}\sum_{k,\ell}^{K,L}\pi_{k.}^2\pi_{.\ell}^2 -  \frac{2}{n(n-1)}\sum_{kl}^{K,L}\pi_{k\ell}^2 -  \frac{4(n-2)}{n(n-1)}\sum_{k\ell}^{K,L}\pi_{k\ell}\pi_{.\ell}\pi_{k.} 
    \end{split}
\end{equation*}

From this expression we get 

\begin{lemma}\label{lemma:boundbias}
\begin{equation*}
    |\text{bias}_n(\theta^{RI}_0)| \leq \frac{8}{n}
\end{equation*}
\begin{equation*}
 |\text{bias}_n(\theta^{RI}_0)| = O(1/n), \quad \text{and} \quad
 \lim_{n \to  +\infty}\text{bias}_n(\theta^{RI}_0) = 0.  
\end{equation*}
\end{lemma}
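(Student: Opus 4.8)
The plan is to argue directly from the rewritten expression of $\text{bias}_n(\theta^{RI}_0)$ displayed just above the statement, and to control each of its three terms by a uniform bound. First I would note that each of the three $\pi$-dependent sums appearing there lies in $[0,1]$: indeed $\sum_{k,\ell}\pi_{k\ell}^2 \le \sum_{k,\ell}\pi_{k\ell} = 1$ since every $\pi_{k\ell}\le 1$; likewise $\sum_{k,\ell}\pi_{k\ell}\pi_{k.}\pi_{.\ell}\le \sum_{k,\ell}\pi_{k\ell}=1$ because $\pi_{k.}\pi_{.\ell}\le 1$; and $\sum_{k,\ell}\pi_{k.}^2\pi_{.\ell}^2 = \bigl(\sum_k\pi_{k.}^2\bigr)\bigl(\sum_\ell\pi_{.\ell}^2\bigr)\le 1$ by the same argument applied to each marginal. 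Since the three coefficients $\tfrac{4n-6}{n(n-1)}$, $\tfrac{2}{n(n-1)}$ and $\tfrac{4(n-2)}{n(n-1)}$ are nonnegative for $n\ge 2$, the triangle inequality then gives
\[
|\text{bias}_n(\theta^{RI}_0)| \le \frac{4n-6}{n(n-1)} + \frac{2}{n(n-1)} + \frac{4(n-2)}{n(n-1)} = \frac{8n-12}{n(n-1)}.
\]

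The final step is the elementary comparison $\tfrac{8n-12}{n(n-1)} \le \tfrac{8}{n}$, which after multiplying through by $n(n-1)>0$ reduces to $8n-12 \le 8n-8$, i.e.\ to $-12\le -8$, true for all $n$. This proves $|\text{bias}_n(\theta^{RI}_0)| \le 8/n$. The two remaining claims are then immediate: $8/n = O(1/n)$ gives $|\text{bias}_n(\theta^{RI}_0)| = O(1/n)$, and applying the squeeze theorem to $0\le|\text{bias}_n(\theta^{RI}_0)|\le 8/n$ as $n\to+\infty$ yields $\lim_{n\to+\infty}\text{bias}_n(\theta^{RI}_0)=0$.

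There is no genuine difficulty here; the only point worth a moment's care is bounding the sums cleanly --- in particular using the crude but effective estimate $\pi_{k.}\pi_{.\ell}\le 1$ for the cross term rather than Cauchy--Schwarz, and checking the last fraction inequality rather than merely invoking $\tfrac{8n-12}{n(n-1)}\sim \tfrac8n$. One could squeeze the constant below $8$ with slightly tighter bounds on the individual sums, but since $8/n$ already suffices for both the $O(1/n)$ statement and the limit, I would not bother sharpening it.
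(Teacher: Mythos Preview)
Your proof is correct and follows essentially the same approach as the paper: bound each of the three $\pi$-sums by $1$, apply the triangle inequality to obtain $|\text{bias}_n(\theta^{RI}_0)| \le \tfrac{8n-12}{n(n-1)} = \tfrac{4(2n-3)}{n(n-1)}$, and then compare this to $\tfrac{8}{n}$ via the elementary inequality $2n-3 < 2(n-1)$. Your write-up is in fact slightly more explicit than the paper's in justifying why each $\pi$-sum lies in $[0,1]$ and in verifying the final fraction inequality.
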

\begin{proof}
As seen in Equation~\eqref{eq:biasRewritten}, the bias consist of three terms. The absolute value of the sum of these three terms is bounded by the sum of their absolute values. Then, using that $\sum_{k, \ell} \pi_{k\ell} = 1$ and all $\pi_{k\ell} \geq 0$, we bound $\sum_{k,\ell}\pi_{k.}^2\pi_{.\ell}^2$, $\sum_{kl}\pi_{k\ell}^2$ and $\sum_{k\ell}\pi_{k\ell}\pi_{.\ell}\pi_{k.}$ by $1$ and we get
$|\text{bias}_n(\theta^{RI}_0)| \leq \frac{4(2n-3)}{n(n-1)}$. We have, $(2n-3) < 2(n-1)$ and the result follows.

\end{proof}



\paragraph{Empirical bias.} In the case of independence the bias is zero. In the case of dependence, using Lemma \ref{lemma:boundbias} we get that the bias is smaller than $0.04$ for $n$ larger than $200$.
Following the work of \cite{steinley2018note}, we study the importance of the difference empirically for small value of $n$ in the next paragraph. In summary for $n$ larger than $64$ we observe a small bias, typically smaller than $10^{-2}$.
For smaller values of $n$ the bias can be larger.

\paragraph*{Simulation setting.} We study the evolution of the bias by comparing two classifications with equal number of groups ($K = L$), with values varying in $K \in \{2, 4, 8, 16, 32, 64, 128 \}$ and a growing number of individuals. For drawing the two compared classifications under the multinomial model, see Table~\ref{tab:Pi Contigency}. We consider three scenarios described below where  we tune the level of difficulty by controlling the balance between group sizes with the parameters $\epsilon$.

\begin{description}
    \item{Scenario 1.} In the first scenario we investigate a $\pi_{kl}$ distribution with a disproportionate diagonal. All other entries being null.

\begin{equation*}
\pi_{k\ell} = 
\begin{pmatrix}
1-\epsilon & 0 & \cdots & 0 \\
0 & \frac{\epsilon}{K-1} & \cdots & 0 \\
\vdots  & \vdots  & \ddots & \vdots  \\
0 & 0 & \cdots & \frac{\epsilon}{K-1} 
\end{pmatrix}
\end{equation*}

\item{Scenario 2.} In the second scenario we investigate a $\pi_{kl}$ distribution with a proportional diagonal and extra diagonal dependency. All other entries being null.

\begin{equation*}
\pi_{k\ell} = 
\begin{pmatrix}
(1-\epsilon)/K & \epsilon/K & \cdots & 0 \\
0 & (1-\epsilon)/K & \cdots & 0 \\
\vdots  & \vdots  & \ddots & \vdots  \\
\epsilon/K & 0 & \cdots & (1-\epsilon)/K 
\end{pmatrix}
\end{equation*}

\item{Scenario 3.} In the third scenario we investigate a $\pi_{kl}$ distribution with one line and one column being disproportional and all other entries being null.

\begin{equation*}
\pi_{k\ell} = 
\begin{pmatrix}
1-\epsilon & \frac{\epsilon}{K+L-2} & \cdots & \frac{\epsilon}{K+L-2} \\
\frac{\epsilon}{K+L-2} & 0 & \cdots & 0 \\
\vdots  & \vdots  & \ddots & \vdots  \\
\frac{\epsilon}{K+L-2} & 0 & \cdots & 0 
\end{pmatrix}
\end{equation*}

\end{description} 

\paragraph{Results.} The results are shown in Figure~\ref{fig:BiasHubert} where the bias is shown in its absolute value with $\log_2/\log_{10}$ scales. For the different scenarios, the parameter of imbalanceness $\epsilon$,is fixed to $0.3$ and $0.8$. 

In the different scenarios, the bias remains moderates for most values of $K$ and $n$. When the number of individuals is small however, the difference turns to be more important and using the $(A)RI$ lead to misguiding conclusions. 

\begin{figure}[H]
\hspace*{-0.5cm}
  \includegraphics[scale=0.45]{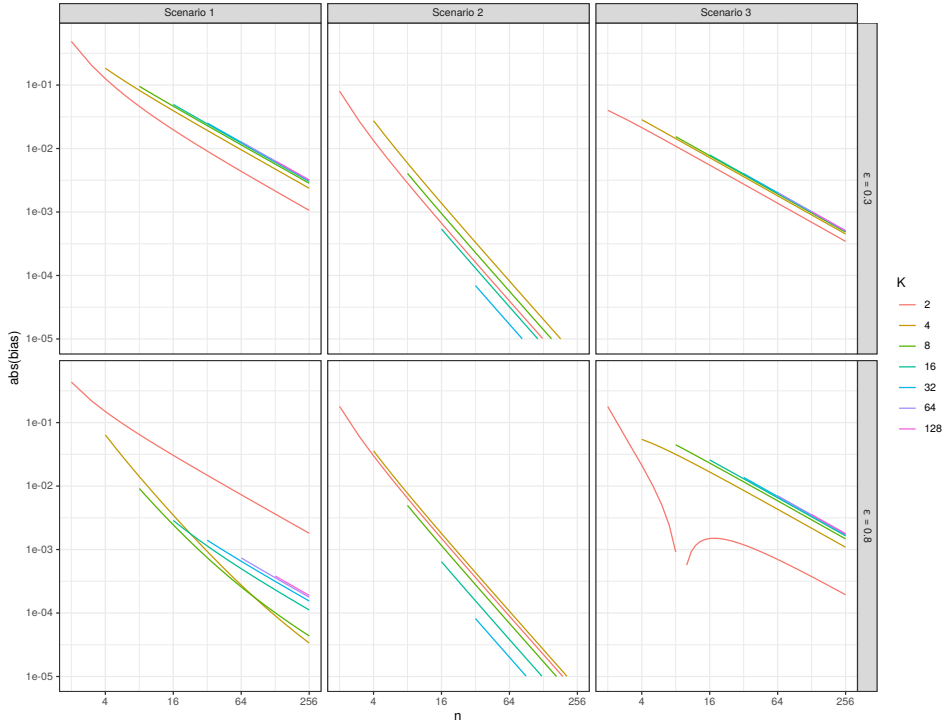}
  \caption{\citeauthor{hubert1985comparing}'s ARI bias for different scenarios of $\pikl$-distribution}
  \label{fig:BiasHubert}
\end{figure}

\section{Conclusion}

As a conclusion, we argue that one should always prefer our $M(A)RI$ to the $(A)RI.$ There are four main reasons for this.

\begin{itemize}
    \item The adjustment of the $RI$ is based on a hypergeometric distribution which is unsatisfying from a modeling perspective. In particular, it forces the size of the clusters to be the same and it ignores randomness of the sampling (see the introduction). A multinomial model of the $MARI$ does not force the size of the clusters and properly model randomness. Furthermore, the model easily extends to the dependant case.
    \item The difference between the $ARI$ and $MARI$ can be large for small $n$ but essentially vanish for large $n$  (see Section \ref{section:bias}).
    \item The $M(A)RI$ can be computed just as fast as the $(A)RI$ in only $O(n)$ rather than $O(n + KL)$ using our \texttt{aricode} package.
    \item The $M(A)RI$ does not take into account pairs coherent by difference which -- as argued in Section \ref{sec:newdefRI} -- unnecessarily complexify the analysis and interpretation of the $(A)RI$.
\end{itemize}

\begin{acknowledgements}
This work is supported by (1) allocations doctorales sur domaines ciblés  (ARDoc) de la Région île de France and (2) an ATIGE grant from Genopole. The IPS2 benefits from the support of the LabExSaclay Plant Sciences-SPS.
\end{acknowledgements}

%
\section*{Conflict of interest}
We declare that we have no conflict of interest.



\bibliographystyle{plainnat}
\bibliography{refs}

\eject

%
%

\end{document}